\newtheorem{thm}{Theorem}
\newtheorem{lemma}{Lemma}
\theoremstyle{definition}
\newtheorem{assum}{Assumption}
\newtheorem{remark}{Remark}
\theoremstyle{proposition}
\newtheorem{propos}{Proposition}
\newtheorem{exam}{Example}
\begin{document}
\begin{frontmatter}

\title{Bonus--malus systems with different claim types and varying deductibles}

%



\author{\inits{O.}\fnm{Olena}\snm{Ragulina}}\email{ragulina.olena@gmail.com}

\address{Taras Shevchenko National University of Kyiv, \\
Department of Probability Theory, Statistics and Actuarial Mathematics,
\\
64 Volodymyrska Str., 01601 Kyiv, Ukraine}

\markboth{O. Ragulina}{Bonus--malus systems with different claim types
and varying deductibles}

\begin{abstract}
The paper deals with bonus--malus systems with different claim types and
varying deductibles.
The premium relativities are softened for the policyholders who are in
the malus zone and these
policyholders are subject to per claim deductibles depending on their
levels in the bonus--malus scale
and the types of the reported claims. We introduce such bonus--malus
systems and study their basic
properties. In particular, we investigate when it is possible to
introduce varying deductibles,
what restrictions we have and how we can do this. Moreover, we deal
with the special case where
varying deductibles are applied to the claims reported by policyholders
occupying the highest
level in the bonus--malus scale and consider two allocation principles
for the deductibles.
Finally, numerical illustrations are presented.
\end{abstract}


\begin{keywords}
\kwd{Bonus--malus system}
\kwd{claim type}
\kwd{varying deductible}
\kwd{indifference principle}
\kwd{allocation principle}
\kwd{premium relativity}
\kwd{Markov chain}
\kwd{transition matrix}
\kwd{stationary distribution}
\end{keywords}
\begin{keywords}[2010]
\kwd{91B30}
\kwd{60J20}
\kwd{60G55}
\end{keywords}

\received{10 April 2017}
\revised{11 June 2017}
\accepted{13 June 2017}
\publishedonline{28 June 2017}
\end{frontmatter}

\section{Introduction and motivation}
\label{sec:1}

One of the main tasks of an actuary is to design a tariff structure
that fairly distributes the total risk
of potential losses among policyholders. To this end, he often has to
grade all policyholders into risk classes
such that all policyholders belonging to the same class pay the same premium.
Rating systems penalizing policyholders responsible for one or more
accidents by premium surcharges (or maluses),
and rewarding claim-free policyholders by giving them discounts (or
bonuses) are now in force in many developed
countries. Such systems, which are often called bonus--malus systems,
aim to assess individual risks better.

The amount of premium is adjusted each year on the basis of the
individual claims experience.
In practice, a bonus--malus scale consists of a finite number of levels
and each of the levels has its own relative
premium. After each year, the policyholder moves up or down according
to the transition rules and the number of
claims reported during the current year. Thus, bonus--malus systems also
encourage policyholders to be careful.
Note that the premium relativities are traditionally computed using a
quadratic loss function.
This method is proposed by Norberg in his pioneering work \cite{No1976}
on segmented tariffs.
Alternatively, Denuit and Dhaene \cite{DeDh2001} use an exponential
loss function to compute the relativities.

In most of the commercial bonus--malus systems used by insurance
companies, knowing the current level
and the number of claims during the current period suffices to
determine the next level in the scale.
Therefore, the future level depends only on the present and not on the past.
The numbers of claims in different years are usually assumed to be independent.
So the trajectory of each policyholder in the bonus--malus scale can be
considered as a Markov chain.
For details and more information concerning bonus--malus systems, we
refer the reader to
\cite{DeMaPiWa2007,Le1995,RoScScTe1999}. In particular, \cite
{DeMaPiWa2007} presents a comprehensive
treatment of the various experience rating systems and their
relationships with risk classification.

As pointed out by many authors, traditional bonus--malus systems suffer
from two considerable drawbacks:
\begin{enumerate}
\item[(i)]
The claim amounts are not taken into account. So a posteriori
corrections depend only on the number of claims.
In this case, policyholders who had accidents with small or large
claims are penalized unfairly in the same way.
In particular, this breeds bonus hunger when policyholders cover small
claims themselves in order to avoid
future premium increases.

\item[(ii)]
At any time the policyholders may leave the insurance company without
any further financial penalties.
Thus, bonus--malus systems create the possibility of malus evasion, i.e.
the situation when the policyholders
leave the insurance company to avoid premium increase because of
reported claims.
\end{enumerate}

An alternative approach, which, at least theoretically, eliminates the
second drawback, is proposed by Holtan \cite{Ho1994}.
He suggests the use of very high deductibles that may be borrowed by
the policyholders in the insurance company.
The deductibles are assumed to be constant for all policyholders, i.e.
independent of the level they occupy
in the bonus--malus system at the time of claim occurrence.
Although technically acceptable, this approach obviously causes
considerable practical problems.
Practical consequences of Holtan's proposal are investigated by Lemaire
and Zi \cite{LeZi1994}.
Particularly, it is shown that the introduction of high deductibles
increases the variability of payments
and the efficiency of the rating systems for most policyholders.

Bonus--malus systems with varying deductibles are introduced by
Pitrebois, Denuit and Walhin \cite{PiDeWa2005}.
Specifically, the a posteriori premium correction induced by the
bonus--malus system is replaced by a deductible
(in whole or in part).
To each level of the bonus--malus system in the malus zone is assigned
an amount of deductible, which is applied
to the claims filed during the coverage period either annually or claim
by claim.
Relative premiums at high levels of bonus--malus systems are often very
large and the systems can be softened
by introducing deductibles.
The insurance company compensates the reduced penalties in the malus
zone with the deductibles paid by policyholders
who report claims being in the malus zone.
This can be commercially attractive since the policyholders are
penalized only if they report claims in the future.\looseness=-1

As pointed out in \cite{PiDeWa2005}, combining bonus--malus systems with
varying deductibles presents a number of advantages.
Firstly, the policyholders will do all their best to prevent or at
least decrease the losses.
Secondly, even if the policyholder leaves the company after a claim, he
has to pay for the deductible.
Thirdly, relative premiums and amounts of deductibles may be tuned in
an optimal way in order to attract the policyholders.
The numerical illustrations show that the mixed case (reduced relative
premiums combined with per claim deductibles)
gives the best results. The amounts of deductibles are moderate in this case.
Nevertheless, this approach does not eliminate the first drawback
mentioned above.

To eliminate the first drawback, a few other approaches have been
proposed. Bonus--malus systems involving different
claim types are designed in \cite{PiDeWa2006}. Each claim type induces
a specific penalty for the policyholder.
In particular, 
claim amounts can be taken into account in this way.
For some early results in this direction, see also \cite{Le1995,Pi1997,Pi1998}.

The next approach is based on a generalization of results obtained by
Dionne and Vanasse \cite{DiVa1989,DiVa1992},
who propose a bonus--malus system that integrates a priori and a
posteriori information on an individual basis.
Specifically, the system is derived as a function of the years that the
policyholder is in the portfolio,
the number of claims and his individual characteristics.
Frangos and Vrontos \cite{FrVr2001} expand the frame developed in \cite
{DiVa1989,DiVa1992} and propose a generalized
bonus--malus system that takes into consideration simultaneously the
policyholder's characteristics, the number
of his claims and the exact amount of each claim. Particularly, it is
assumed that we have all information about
the claim frequency history and the claim amount history for each
policyholder for the time period he is in the portfolio.
In this generalized bonus--malus system, the premium is a function of
the years that the policyholder is in the portfolio,
his number of claims, the amount of each claim and the significant a
priori rating variables. Therefore, the future premium
depends on the past (all policyholder's history) and is actually
calculated individually for each policyholder.
This approach is extended and developed in \cite{MaHa2009,MeSa2005,TzVrFr2014}.

Another approach that takes into account claim amounts is proposed by
Bonsdorff \cite{Bo2005}.
The author considers a general framework for a bonus--malus system based
on the number of claims during the previous year
and the total amount of claims during the previous year.
The set of the bonus levels is some interval and the transitions
between the levels are determined by these characteristics.

G{\'o}mez-D{\'e}niz, Hern{\'a}ndez-Bastida and Fern{\'a}ndez-S{\'
a}nchez \cite{GoDeHeBaFeSa2014} obtain expressions
that can be used to compute bonus--malus premiums based on the
distribution of the total claim amount but not
on the claims which produce the amounts.

Another modification of traditional bonus--malus systems, which take
into account only the number of claims,
is considered by G{\'o}mez-D{\'e}niz in the recent paper \cite{GoDe2016}.
The author presents a statistical model, which distinguishes between
two different claim types, incorporating
a bivariate distribution based on the assumption of dependence.

The present paper deals with the case where both penalty types are used
for the policyholders who are in the malus zone.
Specifically, the premium relativities, which are computed using a
quadratic loss function, are softened and the policyholders
who are in the malus zone are subject to per claim deductibles. The
mixed bonus--malus system combining premium relativities
and deductibles is expected to be the most relevant in practice (see
\cite{DeMaPiWa2007,PiDeWa2005}).

We try to eliminate both drawbacks mentioned above. To take into
account claim amounts, we consider different
claim types and use the multi-event bonus--malus systems introduced in
\cite{PiDeWa2006}.
To eliminate the second drawback, we introduce varying deductibles for
the policyholders who are in the malus zone.
The deductibles depend on the level of the policyholder in the
bonus--malus scale and the types of the reported claims.
Such bonus--malus systems present a number of advantages and seem to be
very attractive for policyholders.
Namely, policyholders reporting small and large claims are not
penalized in the same way. This helps to avoid
or at least decrease bonus hunger. Moreover, all advantages mentioned
in \cite{PiDeWa2005} are also in force.

The rest of the paper is organized as follows. In Section~\ref{sec:2},
we describe the multi-event bonus--malus systems
introduced in \cite{PiDeWa2006}. In Section~\ref{sec:3}, we introduce
varying deductibles in such bonus--malus systems
and study basic properties of such systems. In particular, we
investigate when it is possible to introduce varying
deductibles in the bonus--malus systems with different claim types, what
restrictions we have and how we can do this.
Section~\ref{sec:4} deals with the special case where varying
deductibles are applied to the claims reported by
policyholders occupying the highest level in the bonus--malus scale. We
consider two allocation principles for
the deductibles, which seems to be natural and fair for policyholders.
In Section~\ref{sec:5}, we consider an example
of such a bonus--malus system, deal with exponentially distributed claim
sizes and present numerical illustrations.
Section~\ref{sec:6} completes the paper.

\section{Bonus--malus systems with different claim types}
\label{sec:2}

To take into consideration different claim types, we use the
multi-event bonus--malus systems introduced
in \cite{PiDeWa2006} (see also \cite{DeMaPiWa2007}).

Let us pick a policyholder at random from the portfolio and denote by
$N$ the number of claims reported by the policyholder
during the year. In what follows, we assume that there is no a priori
risk classification (or we work inside a specified
rating cell). Denote by $\lambda>0$ the a priori annual expected claim
frequency. Let $\varTheta$ be the (unknown) accident
proneness of this policyholder, i.e. $\varTheta$ represents the residual
effect of unobserved characteristics.
The risk profile of the portfolio is described by the distribution
function $F_{\varTheta}$ of $\varTheta$.
It is usually assumed that $\mathbb P[\varTheta\ge0]=1$ and $\mathbb
E[\varTheta]=1$.
Thus, the actual (unknown) annual claim frequency of this policyholder
is $\lambda\varTheta$.\looseness=-1

We assume that the number of claims $N$ is mixed-Poisson distributed.
To be more precise, the conditional probability
mass function of $N$ is given by
\[
\mathbb P[N=j \, | \,\varTheta=\theta] =\frac{(\lambda\theta)^j}{j!}\, e^{-\lambda\theta},
\quad j\ge0.
\]

Hence, the unconditional probability mass function of $N$ is given by
\[
\mathbb P[N=j] =\int_0^{+\infty} \mathbb P[N=j \, | \,
\varTheta=\theta]\, \mathrm{d} F_{\varTheta}(\theta), \quad j\ge0.
\]

We introduce $m+1$ different claim types reported by the policyholder.
Each claim type induces a specific penalty
for the policyholder, which will be described below. The type of the
given claim is determined by the claim amount $C$.
We assume that all claim amounts are independent and identically
distributed, and claim amounts and claim frequencies are
mutually independent. The claims are classified according to a
multinomial scheme. Let $0<c_1^* <c_2^* <\cdots<c_m^* <\infty$.
We choose these numbers so that all claims of size less than or equal
to $c_1^*$ are considered as claims of type 0,
all claims of size from the interval $(c_1^*,c_2^*]$ are considered as
claims of type 1 and so on; finally, all claims
of size greater than $c_m^*$ are considered as claims of type $m$. Let
\begin{align*}
q_0&=\mathbb P\bigl[C\le c_1^*\bigr], \qquad
q_1=\mathbb P\bigl[c_1^*< C\le c_2^*\bigr],\\
 q_2&=\mathbb P\bigl[c_2^*< C\le c_3^*
\bigr], \quad \ldots, \quad q_m=\mathbb P\bigl[C> c_m^*
\bigr].
\end{align*}
Thus, each time a claim is reported, it is classified in one of the
$m+1$ possible categories with probabilities
$q_0$, $q_1$, $\ldots$, $q_m$. It is clear that $\sum_{i=0}^{m}q_i=1$.
Moreover, it is natural to choose the $c_i^*$ so that all $q_i$ are
strictly positive.

Denote by $N_i$ the number of claims of type $i$, $0\le i\le m$.
Therefore, for a given $\varTheta$, the random variables $N_0$, $N_1$,
$\ldots$, $N_m$ are mutually independent and
the corresponding conditional probability mass function is given by
\begin{equation}
\label{eq:1} \mathbb P[N_i=j \, | \,\varTheta=\theta] =
\frac{(\lambda\theta
q_i)^j}{j!}\, e^{-\lambda\theta q_i}, \quad j\ge0, \ 0\le i\le m.
\end{equation}

The bonus--malus scale is assumed to have $s+1$ levels numbered from 0
to $s$. A higher level number indicates a higher premium.
In particular, the policyholders who are at level 0 enjoy the maximal
bonus. A specified level is assigned to a new policyholder.
Each claim-free year is rewarded by a bonus point, i.e. the
policyholder goes one level down. Each claim type entails a specific
penalty expressed as a fixed number of levels per claim. It is natural
to assume that larger claims entail more severe penalties.

We suppose that knowing the present level and the number of claims of
each type filed during the present year suffices
to determine the level to which the policyholder is transferred. So the
bonus--malus system may be represented by a Markov chain.

Let $p_{l_0 l}(\lambda\theta;\mathbf{q})$ be the probability of moving
from level $l_0$ to level $l$ for a policyholder
with annual mean claim frequency $\lambda\theta$ and vector of
probabilities $\mathbf{q} =(q_0, q_1, \ldots, q_m)^T$, where
$0\le l_0\le s$, $0\le l\le s$ and $q_i$ is the probability that the
claim is of type $i$.
Denote by $P(\lambda\theta;\mathbf{q})$ the one-step transition matrix, i.e.
\[
P(\lambda\theta;\mathbf{q})= %
\begin{pmatrix}
p_{00}(\lambda\theta;\mathbf{q}) & p_{01}(\lambda\theta;\mathbf{q}) &
\ldots& p_{0s}(\lambda\theta;\mathbf{q}) \\
p_{10}(\lambda\theta;\mathbf{q}) & p_{11}(\lambda\theta;\mathbf{q}) &
\ldots& p_{1s}(\lambda\theta;\mathbf{q}) \\
\vdots& \vdots& \ddots& \vdots\\
p_{s0}(\lambda\theta;\mathbf{q}) & p_{s1}(\lambda\theta;\mathbf{q}) &
\ldots& p_{ss}(\lambda\theta;\mathbf{q})
\end{pmatrix} %
.
\]

Taking the $n$th power of $P(\lambda\theta;\mathbf{q})$ yields the
$n$-step transition matrix with elements
$p_{l_0 l}^{(n)}(\lambda\theta;\mathbf{q})$. Here $p_{l_0
l}^{(n)}(\lambda\theta;\mathbf{q})$ is the probability
of moving from level $l_0$ to level $l$ in $n$ transitions.

The transition matrix $P(\lambda\theta;\mathbf{q})$ is assumed to be
regular, i.e. there is some integer $n_0 \ge1$ such that
all entries of $P ((\lambda\theta;\mathbf{q}) )^{n_0}$ are
strictly positive. Consequently, the Markov chain describing
the trajectory of a policyholder with expected claim frequency $\lambda
\theta$ and vector of probabilities $\mathbf{q}$
is ergodic and thus possesses a stationary distribution
\[
\boldsymbol{\pi}(\lambda\theta;\mathbf{q})= \bigl( \pi_0(\lambda\theta
;\mathbf{q}), \pi_1(\lambda\theta;\mathbf{q}), \ldots,
\pi_s(\lambda\theta;\mathbf{q}) \bigr)^T.
\]
Here $\pi_l(\lambda\theta;\mathbf{q})$ is the stationary probability
for a policyholder with annual mean claim frequency
$\lambda\theta$ to be at level $l$, i.e.
\[
\pi_{l}(\lambda\theta;\mathbf{q}) =\lim_{n\to\infty}
p_{l_0
l}^{(n)}(\lambda\theta;\mathbf{q}).
\]

The stationary probabilities $\pi_l(\lambda\theta;\mathbf{q})$ can be
obtained directly (see \cite{DeMaPiWa2007,RoScScTe1999}).
Indeed, since the matrix $P(\lambda\theta;\mathbf{q})$ is regular, the
matrix $I-P(\lambda\theta;\mathbf{q})+E$ is invertible
and $\boldsymbol{\pi}(\lambda\theta;\mathbf{q})$ is given by
\begin{equation}
\label{eq:2} \boldsymbol{\pi}^T(\lambda\theta;\mathbf{q}) =
\mathbf{e}^T \bigl( I-P(\lambda\theta;\mathbf{q})+E
\bigr)^{-1},
\end{equation}
where $\mathbf{e}$ is a column vector of 1s, $E$ is an $(s+1)\times
(s+1)$ matrix consisting of $s+1$ column vectors $\mathbf{e}$
and $I$ is an $(s+1)\times(s+1)$ identity matrix.

Next, let $L$ be the level occupied in the scale by a randomly selected
policyholder and $\pi_l$ be the proportion of
policyholders at level $l$ once the steady state has been reached. Then
\begin{equation}
\label{eq:3} \pi_l =\mathbb P[L=l] =\int_0^{+\infty}
\pi_l(\lambda\theta;\mathbf {q})\, \mathrm{d} F_{\varTheta}(
\theta), \quad0\le l\le s.
\end{equation}

It is easily seen that $\sum_{l=0}^{s} \pi_l =1$.

We denote by $r_l$ the premium relativities associated with level $l$.
The meaning is that a policyholder occupying level $l$
pays the premium equal to $\lambda r_l \mathbb E[C]$. Note that here
and 
below we 
consider only net premiums.
In what follows, we assume that $\mathbb E[C] <\infty$ and the
distribution function $F_C$ of $C$ is continuous.

To compute the premium relativities $r_l$, we use a quadratic loss
function as proposed in \cite{No1976}
(see also \cite{DeMaPiWa2007,Le1995}). To this end, we minimize the
expected squared difference between the "true"
relative premium $\varTheta$ and the relative premium $r_L$ applicable to
this policyholder after the stationary state
has been reached, i.e. we minimize $\mathbb E  [ (\varTheta-r_L)^2
 ]$. The solution to this problem is given by
\begin{equation}
\label{eq:4} r_l=\frac{\int_0^{+\infty} \theta\pi_l(\lambda\theta;\mathbf{q})\,
\mathrm{d} F_{\varTheta}(\theta)}{
\int_0^{+\infty} \pi_l(\lambda\theta;\mathbf{q})\, \mathrm{d} F_{\varTheta
}(\theta)}
\end{equation}
(see, e.g., \cite[pp.~185--186]{DeMaPiWa2007} for the details).

\section{Varying deductibles in the bonus--malus systems with different
claim types}
\label{sec:3}

The policyholder occupying level $l$ in the bonus--malus systems
described in Section~\ref{sec:2} should pay $\lambda r_l \mathbb E[C]$.
We suppose that the relative premiums of the policyholders who are in
the bonus zone, i.e. with $r_l \le1$, are unchanged.
So they pay premiums equal to $\lambda r_l \mathbb E[C]$ and are
subject to no further penalties. Let $s_0=\min\{l\colon r_l>1\}$.
The relative premiums of the policyholders who are in the malus zone,
i.e. with $r_l>1$, or equivalently occupying level $l$
such that $s_0\le l\le s$, are softened in the following way. Instead
of paying the premium equal to $\lambda r_l \mathbb E[C]$,
the policyholder who is at level $l$ pays a reduced premium equal to
$(1-\alpha_l)\lambda r_l \mathbb E[C]$ for some specified
$\alpha_l \ge0$ depending on level $l$. We suppose that $\alpha_l$
satisfy the following assumption.

\begin{assum}
\label{assum:1}
$1\le(1-\alpha_{s_0})r_{s_0}\le(1-\alpha_{s_0+1})r_{s_0+1}\le\cdots
\le(1-\alpha_s)r_s$.
\end{assum}

Note that Assumption~\ref{assum:1} implies that the policyholder
occupying a higher level in the bonus--malus scale pays a higher
reduced premium, which is not less than the basic premium $\lambda
\mathbb E[C]$. Moreover, from Assumption~\ref{assum:1} we have
$0\le\alpha_l\le1-1/r_l$ for all $l$ such that $s_0\le l\le s$. The
case $\alpha_l=0$ corresponds to the situation when
the policyholder occupying level $l$ pays premium equal to $\lambda r_l
\mathbb E[C]$ and is subject to no further penalties.
If $\alpha_l= 1-1/r_l$, then the policyholder pays only the basic
premium $\lambda\mathbb E[C]$ and has to pay something
for claims in the future. To compensate the reduced premium, the
policyholder is subject to per claim deductible, i.e. applied
to each reported claim separately, equal to $d_{l,i}$ depending on
level $l$ occupied in the malus zone and claim type $i$.
We impose the following natural restrictions on $d_{l,i}$.

 \begingroup
 \abovedisplayskip=4.5pt
 \belowdisplayskip=4.5pt
\begin{assum}
\label{assum:2}
\begin{enumerate}
\item[(i)]
For all $l$ such that $s_0\le l\le s$, we have
\[
0\,{\le}\, d_{l,0}\,{\le}\, c_1^*, \qquad0\,{\le}\, d_{l,1}\,{\le}\,
c_1^*, \qquad0\,{\le}\, d_{l,2}\,{\le}\, c_2^*, \quad\ldots,
\quad0\,{\le}\, d_{l,m}\,{\le}\, c_m^*.
\]
\item[(ii)]
For every fixed $l$ such that $s_0\le l\le s$, we have
\[
d_{l,0}\le d_{l,1}\le\cdots\le d_{l,m}.
\]
\item[(iii)]
For every fixed $i$ such that $0\le i\le m$, we have
\[
d_{s_0,i}\le d_{s_0+1,i}\le\cdots\le d_{s,i}.
\]
\end{enumerate}
\end{assum}
\endgroup

Assertion~(i) of Assumption~\ref{assum:2} means that if a claim
reported is of type $i$, where $0\le i\le m$, then the deductible
applied to it is strictly less than the claim amount, i.e. the
insurance company covers at least some part of the losses.
Next, assertion~(ii) implies that larger claims are subject to higher
deductibles. Finally, assertion~(iii) means that
the higher level in the bonus--malus scale implies a higher deductible
for each specified claim amount.

For policyholders occupying level $l$, the deductibles $d_{l,0}$,
$d_{l,1}$, $\ldots$, $d_{l,m}$ are found using the indifference
principle (see \cite{DeMaPiWa2007,PiDeWa2005}): for this group of
policyholders, the part $\alpha_l$ of the penalties
induced by the bonus--malus system is on average equal to the total
amount of deductibles paid by these policyholders.
Thus, the indifference principle for the policyholder occupying level
$l$ can be written in the following way:\allowdisplaybreaks[0]
\begin{align}
 \lambda r_l \mathbb E[C]
&=(1-\alpha_l)\lambda r_l \mathbb E[C] +\lambda
r_l \bigl( \mathbb E[C \, | \,C\le d_{l,0}]\, \mathbb P[C
\le d_{l,0}]\notag\\[-1pt]
&\quad+d_{l,0}\, \mathbb P\bigl[d_{l,0}< C\le
c_1^*\bigr] +d_{l,1}\, \mathbb P\bigl[c_1^*< C
\le c_2^*\bigr] +\ldots\notag
\\[-1pt]
&\quad+d_{l,m}\, \mathbb P\bigl[C> c_m^*\bigr] \bigr),
\quad s_0\le l\le s. \label{eq:5}\vadjust{\goodbreak}
\end{align}%

On the left-hand side of~\eqref{eq:5}, we have the premium paid by the
policyholder occupying level $l$ in the bonus--malus system
described in Section~\ref{sec:2}. On the right-hand side of~\eqref
{eq:5}, we have the expected amount paid by this policyholder
in the bonus--malus system with varying deductibles. This amount
consists of the reduced premium and the expected amount of
penalties induced by deductibles.\allowdisplaybreaks[1]

Equation~\eqref{eq:5} can be rewritten as
\begin{equation*}
\begin{split} \alpha_l \mathbb E[C] &=\mathbb E[C \, |
\,C\le d_{l,0}]\, \mathbb P[C\le d_{l,0}] +d_{l,0}\,
\mathbb P\bigl[d_{l,0}< C\le c_1^*\bigr]
\\
&\quad+d_{l,1}\, \mathbb P\bigl[c_1^*< C\le
c_2^*\bigr] +\cdots+d_{l,m}\, \mathbb P\bigl[C>
c_m^*\bigr], \quad s_0\le l\le s, \end{split}
\end{equation*}
which is equivalent to
\begin{equation}
\label{eq:6} %
\begin{split} \alpha_l \mathbb E[C] &=
\mathbb E[C \, | \,C\le d_{l,0}]\, \mathbb P[C\le d_{l,0}]
+d_{l,0}\, \bigl( q_0-\mathbb P[C\le d_{l,0}]
\bigr)
\\
&\quad+d_{l,1} q_1 +\cdots+d_{l,m}
q_m, \quad s_0\le l\le s. \end{split} %
\end{equation}

Thus, in order to introduce varying deductibles in the bonus--malus
system, we have to choose $\alpha_l$, $d_{l,0}$, $d_{l,1}$,
$\ldots$, $d_{l,m}$ satisfying~\eqref{eq:6} and Assumptions~\ref
{assum:1} and~\ref{assum:2} for all $l$ such that $s_0\le l\le s$.
In what follows, we call any such combination of $\alpha_l$, $d_{l,0}$,
$d_{l,1}$, $\ldots$, $d_{l,m}$, where $s_0\le l\le s$,
by a solution to~\eqref{eq:6}.

\begin{lemma}
\label{lem:1}
The right-hand side of~\eqref{eq:6} is a non-decreasing function of
each of the variables $d_{l,0}$, $d_{l,1}$, $\ldots$, $d_{l,m}$.
\end{lemma}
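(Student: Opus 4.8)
The plan is to analyze the right-hand side of~\eqref{eq:6} variable by variable, treating it as a function of one $d_{l,i}$ at a time with the others held fixed. The only term that involves more than one of the variables in a nontrivial way is the block
\[
g(d_{l,0}) := \mathbb E[C \, | \,C\le d_{l,0}]\, \mathbb P[C\le d_{l,0}] +d_{l,0}\, \bigl( q_0-\mathbb P[C\le d_{l,0}] \bigr),
\]
together with the linear terms $d_{l,1}q_1 + \cdots + d_{l,m}q_m$. For the variables $d_{l,i}$ with $i\ge 1$, the dependence of the right-hand side is simply the linear term $d_{l,i}q_i$ with $q_i>0$, so monotonicity in those variables is immediate. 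The work is therefore concentrated on $d_{l,0}$.

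First I would rewrite $\mathbb E[C \, | \,C\le d_{l,0}]\, \mathbb P[C\le d_{l,0}]$ as the honest (unconditional) expectation $\mathbb E[C\,\mathbf 1\{C\le d_{l,0}\}]$, which, using the continuity of $F_C$ assumed in Section~\ref{sec:2}, equals $\int_0^{d_{l,0}} x \, \mathrm d F_C(x)$. Then $g(d_{l,0}) = \int_0^{d_{l,0}} x \, \mathrm d F_C(x) + d_{l,0}\bigl(q_0 - F_C(d_{l,0})\bigr)$. The cleanest route is to observe that $g(d_{l,0}) = \mathbb E\bigl[\min(C,d_{l,0})\,\mathbf 1\{C\le c_1^*\}\bigr]$, since on $\{C\le d_{l,0}\}$ the minimum is $C$, and on $\{d_{l,0}<C\le c_1^*\}$ (a set of probability $q_0 - F_C(d_{l,0})$ by definition of $q_0$) the minimum is $d_{l,0}$; recall that Assumption~\ref{assum:2}(i) guarantees $d_{l,0}\le c_1^*$, so this identification is valid. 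For $0\le d\le d'\le c_1^*$ one has $\min(C,d)\le \min(C,d')$ pointwise, hence $g$ is non-decreasing by monotonicity of expectation. Alternatively, if one prefers to avoid the truncation rewriting, a short computation of the difference $g(d') - g(d)$ for $d\le d'$ leads to $\int_d^{d'} (x-d')\,\mathrm d F_C(x) + (d'-d)\bigl(q_0 - F_C(d)\bigr)$, and one checks this is nonnegative because $\int_d^{d'}(x-d')\,\mathrm d F_C(x) \ge -(d'-d)\bigl(F_C(d')-F_C(d)\bigr)$ while $q_0 - F_C(d) \ge F_C(c_1^*) - F_C(d) \ge F_C(d') - F_C(d)$, again using $d'\le c_1^*$.

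The main (and really the only) obstacle is handling the $d_{l,0}$ term correctly, because it is the one place where the conditional-expectation notation could mislead: one must not differentiate $\mathbb E[C\mid C\le d_{l,0}]$ and $\mathbb P[C\le d_{l,0}]$ separately and forget how they combine. Recognizing the combination as $\mathbb E[\min(C,d_{l,0})\,\mathbf 1\{C\le c_1^*\}]$ makes the monotonicity transparent and also quietly uses both the continuity of $F_C$ and Assumption~\ref{assum:2}(i). Everything else — the terms $d_{l,1}q_1,\ldots,d_{l,m}q_m$ — contributes trivially, since each $q_i>0$. Assembling these observations gives the claim.
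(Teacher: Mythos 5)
Your proof is correct, and your main argument takes a genuinely cleaner route than the paper for the only nontrivial variable. The paper treats $d_{l,1},\ldots,d_{l,m}$ as evident (as you do) and then handles $d_{l,0}$ by a direct estimate: for $0\le d'_{l,0}\le d''_{l,0}\le c_1^*$ it bounds $\mathbb E[C\,|\,C\le d''_{l,0}]\,\mathbb P[C\le d''_{l,0}]$ from below by $\mathbb E[C\,|\,C\le d'_{l,0}]\,\mathbb P[C\le d'_{l,0}]+d'_{l,0}\,\mathbb P[d'_{l,0}<C\le d''_{l,0}]$ and simplifies the difference to $(d''_{l,0}-d'_{l,0})\,\mathbb P[d''_{l,0}<C\le c_1^*]\ge 0$. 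Your identification of the $d_{l,0}$ block as $\mathbb E[\min(C,d_{l,0})\,\mathbf 1\{C\le c_1^*\}]$ (valid on the domain $d_{l,0}\le c_1^*$ from Assumption~\ref{assum:2}(i)) replaces this algebra with pointwise monotonicity of $d\mapsto\min(C,d)$ and monotonicity of expectation, which is both shorter and makes the structural reason for the monotonicity transparent; your ``alternative'' computation of $g(d')-g(d)$ is essentially the paper's argument in integral form. Two small remarks: continuity of $F_C$ is not actually needed anywhere in your argument (the identities $\mathbb E[C\,|\,C\le d]\,\mathbb P[C\le d]=\mathbb E[C\,\mathbf 1\{C\le d\}]$ and $\mathbb P[d<C\le c_1^*]=q_0-F_C(d)$ hold regardless), so the claim that it is ``quietly used'' is a harmless overstatement — the paper invokes continuity only afterwards, for continuity of the right-hand side of~\eqref{eq:6}, not for Lemma~\ref{lem:1}; and, like the paper, your monotonicity statement is understood on the range $0\le d_{l,0}\le c_1^*$, which is exactly where the lemma is applied.
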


\begin{proof}
The assertion of Lemma~\ref{lem:1} is evident for variables $d_{l,1}$,
$\ldots$, $d_{l,m}$. We now show it for $d_{l,0}$.
Let $0\le d'_{l,0}\le d''_{l,0}\le c_1^*$. Then we get
\begin{equation*}
\begin{split} &\mathbb E\bigl[C \, | \,C\le d''_{l,0}
\bigr]\, \mathbb P\bigl[C\le d''_{l,0}\bigr]
+d''_{l,0}\, \mathbb P\bigl[d''_{l,0}<
C\le c_1^*\bigr]
\\
&\qquad-\mathbb E\bigl[C \, | \,C\le d'_{l,0}\bigr]\,
\mathbb P\bigl[C\le d'_{l,0}\bigr] -d'_{l,0}
\, \mathbb P\bigl[d'_{l,0}< C\le c_1^*\bigr]
\\
&\quad\ge\frac{\mathbb E[C \, | \,C\le d'_{l,0}]\, \mathbb P[C\le
d'_{l,0}] +d'_{l,0}\, \mathbb P[d'_{l,0}< C\le d''_{l,0}]}{
\mathbb P[C\le d''_{l,0}]}\, \mathbb P\bigl[C\le d''_{l,0}
\bigr]
\\
&\qquad+d''_{l,0}\, \mathbb P
\bigl[d''_{l,0}\,{<}\, C\,{\le}\, c_1^*\bigr]
-\mathbb E\bigl[C \, | \,C\,{\le}\, d'_{l,0}\bigr]\, \mathbb P
\bigl[C\,{\le}\, d'_{l,0}\bigr] \,{-}\,d'_{l,0}\,
\mathbb P\bigl[d'_{l,0}\,{<}\, C\,{\le}\, c_1^*\bigr]
\\
&\quad=d'_{l,0}\, \mathbb P\bigl[d'_{l,0}<
C\le d''_{l,0}\bigr] +d''_{l,0}
\, \mathbb P\bigl[d''_{l,0}< C\le
c_1^*\bigr] -d'_{l,0}\, \mathbb P
\bigl[d'_{l,0}< C\le c_1^*\bigr]
\\
&\quad=\bigl(d''_{l,0} -d'_{l,0}
\bigr)\, \mathbb P\bigl[d''_{l,0}< C\le
c_1^*\bigr] \ge0, \end{split} %
\end{equation*}
which proves the lemma.
\end{proof}

In addition, taking into account the continuity of $F_C$ gives that the
right-hand side of~\eqref{eq:6} is continuous
in $d_{l,0}$, $d_{l,1}$, $\ldots$, $d_{l,m}$.

\begin{propos}
\label{propos:1}
For any fixed $\alpha_l$ satisfying Assumption~\ref{assum:1}, we have
\begin{equation}
\label{eq:7} d_{l,m}\le\min \bigl\{ \alpha_l \mathbb
E[C]/q_m, \, c_m^* \bigr\}, \quad s_0\le l
\le s.
\end{equation}
\end{propos}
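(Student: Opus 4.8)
The plan is short and elementary. The bound $d_{l,m}\le c_m^*$ is nothing but assertion~(i) of Assumption~\ref{assum:2}, so the only real content is the inequality $d_{l,m}\le\alpha_l\mathbb E[C]/q_m$.

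For that I would read off a lower bound for the right-hand side of~\eqref{eq:6} by retaining only its last summand $d_{l,m}q_m$ and discarding all the others. This is legitimate provided every other summand is non-negative, which I would check term by term: the quantity $\mathbb E[C \, | \,C\le d_{l,0}]\,\mathbb P[C\le d_{l,0}]$ equals the unconditional expectation of the non-negative random variable $C\mathbf 1_{\{C\le d_{l,0}\}}$ and is therefore $\ge0$; the summand $d_{l,0}\,(q_0-\mathbb P[C\le d_{l,0}])$ equals $d_{l,0}\,\mathbb P[d_{l,0}<C\le c_1^*]\ge0$, where $q_0-\mathbb P[C\le d_{l,0}]\ge0$ because $d_{l,0}\le c_1^*$ by assertion~(i) of Assumption~\ref{assum:2}; and each of $d_{l,1}q_1,\ldots,d_{l,m-1}q_{m-1}$ is $\ge0$ since the $d_{l,i}$ are non-negative (again assertion~(i) of Assumption~\ref{assum:2}) and the $q_i$ are positive. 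Hence the right-hand side of~\eqref{eq:6} is at least $d_{l,m}q_m$, i.e. $\alpha_l\mathbb E[C]\ge d_{l,m}q_m$.

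Dividing through by $q_m>0$ gives $d_{l,m}\le\alpha_l\mathbb E[C]/q_m$, and intersecting this with $d_{l,m}\le c_m^*$ produces exactly~\eqref{eq:7}. I do not foresee any genuine obstacle here; the single point worth a little care is the conditional-expectation term, which should be rewritten as $\mathbb E[C\mathbf 1_{\{C\le d_{l,0}\}}]$ so that its non-negativity — together with that of the companion term $d_{l,0}(q_0-\mathbb P[C\le d_{l,0}])$, which relies on $d_{l,0}\le c_1^*$ — is manifest.
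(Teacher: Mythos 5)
Your proof is correct and follows essentially the same route as the paper: the paper invokes Lemma~\ref{lem:1} to say that for fixed $\alpha_l$ the largest possible $d_{l,m}$ arises when $d_{l,0}=\cdots=d_{l,m-1}=0$, which is exactly your observation that the discarded terms on the right-hand side of~\eqref{eq:6} are non-negative, so $\alpha_l\mathbb E[C]\ge d_{l,m}q_m$. Combining this with $d_{l,m}\le c_m^*$ from assertion~(i) of Assumption~\ref{assum:2} is precisely the paper's conclusion, so your term-by-term non-negativity check is just a direct verification of the same fact.
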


\begin{proof}
From Lemma~\ref{lem:1}, we conclude that for any fixed $\alpha_l$, the
maximum possible value of $d_{l,m}$ is when
$d_{l,0} =d_{l,1} =\cdots=d_{l,m-1}=0$. Therefore, $d_{l,m}\le\alpha
_l \mathbb E[C]/q_m$.
Taking into account assertion (i) of Assumption~\ref{assum:2}
yields~\eqref{eq:7}.
\end{proof}

\begin{propos}
\label{propos:2}
If $l_1$ and $l_2$ are such that $s_0\le l_1< l_2\le s$, then $\alpha
_{l_1}\le\alpha_{l_2}$.
\end{propos}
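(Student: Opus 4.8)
The plan is to read $\alpha_l\mathbb{E}[C]$ off the indifference equation~\eqref{eq:6} and then exploit the monotonicity already proved in Lemma~\ref{lem:1} together with assertion~(iii) of Assumption~\ref{assum:2}. The first observation is structural: for each admissible $l$ the right-hand side of~\eqref{eq:6} is a sum of a function of $d_{l,0}$ alone, namely $\mathbb{E}[C \, | \, C\le d_{l,0}]\,\mathbb{P}[C\le d_{l,0}] + d_{l,0}\bigl(q_0 - \mathbb{P}[C\le d_{l,0}]\bigr)$, and of the linear terms $q_1 d_{l,1},\dots,q_m d_{l,m}$, each depending on a single one of the variables $d_{l,i}$. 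By Lemma~\ref{lem:1}, every one of these summands is a non-decreasing function of its argument.

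Next I would invoke assertion~(iii) of Assumption~\ref{assum:2}: since $s_0\le l_1< l_2\le s$, it gives $d_{l_1,i}\le d_{l_2,i}$ for every $i$ with $0\le i\le m$. Substituting these inequalities termwise into the summands of~\eqref{eq:6} and adding up, the right-hand side of~\eqref{eq:6} evaluated at $l=l_1$ does not exceed the right-hand side evaluated at $l=l_2$; in other words, $\alpha_{l_1}\mathbb{E}[C]\le\alpha_{l_2}\mathbb{E}[C]$. Since the claim amount $C$ is a positive random variable with continuous distribution function, $\mathbb{E}[C]>0$, so dividing by $\mathbb{E}[C]$ yields $\alpha_{l_1}\le\alpha_{l_2}$, which is exactly the assertion.

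There is no real obstacle here; the only points needing a little care are that the right-hand side of~\eqref{eq:6} genuinely splits into single-variable summands, so that Lemma~\ref{lem:1} (a coordinatewise statement) applies after replacing the coordinates $d_{l_1,i}$ by $d_{l_2,i}$ one at a time, and that $\mathbb{E}[C]\neq0$, which allows the final cancellation. One could instead try to extract the monotonicity of $\alpha_l$ directly from Assumption~\ref{assum:1}, but that would additionally require monotonicity of the relativities $r_l$ in $l$; the route through~\eqref{eq:6} and Assumption~\ref{assum:2}(iii) avoids this and is self-contained.

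Should one wish to present it slightly more explicitly, the single inequality chain $\alpha_{l_1}\mathbb{E}[C]=\text{RHS}_{l_1}\le\text{RHS}_{l_2}=\alpha_{l_2}\mathbb{E}[C]$ can be spelled out by first raising $d_{l_1,0}$ to $d_{l_2,0}$ (using the estimate inside the proof of Lemma~\ref{lem:1}), then raising $d_{l_1,1},\dots,d_{l_1,m}$ to $d_{l_2,1},\dots,d_{l_2,m}$ in turn, each step not decreasing the value; but this is just the routine expansion of the one-line argument above.
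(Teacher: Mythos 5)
Your proposal is correct and follows essentially the same route as the paper's proof: invoke Assumption~\ref{assum:2}(iii) to get $d_{l_1,i}\le d_{l_2,i}$ for all $i$, apply the coordinatewise monotonicity of Lemma~\ref{lem:1} to compare the right-hand sides of~\eqref{eq:6}, and conclude via~\eqref{eq:6} that $\alpha_{l_1}\le\alpha_{l_2}$. The extra remarks about the single-variable splitting and $\mathbb{E}[C]>0$ are fine but only make explicit what the paper leaves implicit.
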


\begin{proof}
By assertion (iii) of Assumption~\ref{assum:2}, we have $d_{l_1,0}\le
d_{l_2,0}$, $d_{l_1,1}\le d_{l_2,1}$, $\ldots$,
$d_{l_1,m}\le d_{l_2,m}$. Consequently, from Lemma~\ref{lem:1}, it
follows that\vadjust{\goodbreak}
\begin{equation*}
\begin{split} &\mathbb E[C \, | \,C\le d_{l_1,0}]\, \mathbb P[C
\le d_{l_1,0}] +d_{l_1,0}\, \bigl( q_0-\mathbb P[C\le
d_{l_1,0}] \bigr)
\\
&\qquad+d_{l_1,1} q_1 +\cdots+d_{l_1,m}
q_m
\\
&\quad\le\mathbb E[C \, | \,C\le d_{l_2,0}]\, \mathbb P[C\le
d_{l_2,0}] +d_{l_2,0}\, \bigl( q_0-\mathbb P[C\le
d_{l_2,0}] \bigr)
\\
&\qquad+d_{l_2,1} q_1 +\cdots+d_{l_2,m}
q_m. \end{split} %
\end{equation*}
By~\eqref{eq:6}, we have $\alpha_{l_1}\le\alpha_{l_2}$.
\end{proof}

\begin{thm}
\label{thm:1}
For existence of a solution to~\eqref{eq:6}, it is necessary that
\begin{equation}
\label{eq:8} \alpha_l\le\min \biggl\{ 1-\frac{1}{r_l}, \,
\frac{f(c_1^*, c_2^*,
\ldots, c_m^*)}{\mathbb E[C]} \biggr\} \quad \text{for all} \ s_0\le l\le
s,
\end{equation}
where
\[
f\bigl(c_1^*, c_2^*, \ldots, c_m^*\bigr) =
\mathbb E\bigl[C \, | \,C\le c_1^*\bigr]\, q_0
+c_1^* q_1 +\cdots+c_m^* q_m.
\]
\end{thm}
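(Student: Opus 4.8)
The plan is to derive the two upper bounds in~\eqref{eq:8} separately, both as immediate consequences of the structural results already established. The first bound, $\alpha_l \le 1 - 1/r_l$, is not really a consequence of~\eqref{eq:6} at all but of Assumption~\ref{assum:1}: as already remarked right after that assumption, the inequality $(1-\alpha_l)r_l \ge 1$ rearranges to $\alpha_l \le 1 - 1/r_l$ for every $l$ with $s_0 \le l \le s$. So the only real content is the second bound, $\alpha_l \mathbb E[C] \le f(c_1^*,\dots,c_m^*)$.

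For that, the idea is to look at the right-hand side of~\eqref{eq:6} and maximize it over all admissible choices of the deductibles $d_{l,0},\dots,d_{l,m}$. By Lemma~\ref{lem:1}, the right-hand side of~\eqref{eq:6} is non-decreasing in each $d_{l,i}$, so its supremum over the admissible region of Assumption~\ref{assum:2}(i) is attained by pushing every variable to its largest allowed value: $d_{l,0} = c_1^*$, $d_{l,1} = c_1^*$, $d_{l,2} = c_2^*$, $\ldots$, $d_{l,m} = c_m^*$. (One should check that this corner point is indeed admissible, i.e. also satisfies the monotonicity conditions (ii) and (iii); condition (ii) reads $c_1^* \le c_1^* \le c_2^* \le \cdots \le c_m^*$, which holds, and condition (iii) is vacuous at a single level once we fix the same corner at every level.) Substituting $d_{l,0} = c_1^*$ into the first two terms on the right of~\eqref{eq:6} gives $\mathbb E[C \mid C \le c_1^*]\,\mathbb P[C \le c_1^*] + c_1^*\,(q_0 - \mathbb P[C \le c_1^*])$; since $\mathbb P[C \le c_1^*] = q_0$, the second summand vanishes and this collapses to $\mathbb E[C \mid C \le c_1^*]\, q_0$. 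The remaining terms are $c_1^* q_1 + \cdots + c_m^* q_m$, so the whole right-hand side at this corner equals exactly $f(c_1^*, c_2^*, \ldots, c_m^*)$.

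Putting these together: if a solution to~\eqref{eq:6} exists for level $l$, then $\alpha_l \mathbb E[C]$ equals the right-hand side of~\eqref{eq:6} evaluated at some admissible choice of deductibles, hence is at most the maximal value $f(c_1^*,\dots,c_m^*)$ of that right-hand side; dividing by $\mathbb E[C] > 0$ gives $\alpha_l \le f(c_1^*,\dots,c_m^*)/\mathbb E[C]$. Combining with $\alpha_l \le 1 - 1/r_l$ yields~\eqref{eq:8}.

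I do not expect a serious obstacle here — the argument is essentially "monotone function on a box is maximized at a corner", plus the telescoping identity $\mathbb P[C \le c_1^*] = q_0$ that makes the corner value coincide with $f$. The one point worth stating carefully is why it suffices to maximize the right-hand side of~\eqref{eq:6} ignoring constraints (ii) and (iii) of Assumption~\ref{assum:2}: since the unconstrained maximizer over the box of constraint (i) happens to land in the region cut out by (ii) and (iii) as well, the extra constraints do not lower the maximum, so the bound obtained is valid for every genuine solution. This mirrors exactly the reasoning already used in the proof of Proposition~\ref{propos:1}, where the extremal configuration $d_{l,0} = \cdots = d_{l,m-1} = 0$ was used; the present argument is the "opposite corner" version of the same observation.
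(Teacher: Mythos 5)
Your proposal is correct and follows essentially the same route as the paper: both arguments obtain $\alpha_l \le 1-1/r_l$ directly from Assumption~\ref{assum:1} and obtain $\alpha_l \mathbb E[C] \le f(c_1^*,\ldots,c_m^*)$ by using Lemma~\ref{lem:1} to maximize the right-hand side of~\eqref{eq:6} at the corner $d_{l,0}=c_1^*$, $d_{l,1}=c_1^*$, $d_{l,2}=c_2^*$, $\ldots$, $d_{l,m}=c_m^*$, where it collapses to $f$ since $\mathbb P[C\le c_1^*]=q_0$. Your explicit check that this corner is admissible and the verification of the collapse are just slightly more detailed versions of steps the paper states without comment.
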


\begin{proof}
By Lemma~\ref{lem:1}, the maximum value of the right-hand side of~\eqref
{eq:6} is attained when $d_{l,0} =c_1^*$, $d_{l,1} =c_1^*$,
$d_{l,2} =c_2^*$, $\ldots$\,, $d_{l,m} =c_m^*$ and equals to $\mathbb E[C
\, | \,C\le c_1^*]\, q_0 +c_1^* q_1 +\cdots+c_m^* q_m$.
Therefore, we get
\begin{equation}
\label{eq:9} \alpha_l\le\frac{\mathbb E[C \, | \,C\le c_1^*]\, q_0 +c_1^* q_1
+\cdots+c_m^* q_m}{\mathbb E[C]}, \quad
s_0\le l\le s.
\end{equation}

Since
\begin{equation*}
\begin{split} &\mathbb E\bigl[C \, | \,C\le c_1^*\bigr]
\, q_0 +c_1^* q_1 +\cdots+c_m^*
q_m
\\
&\quad<\mathbb E\bigl[C \, | \,C\le c_1^*\bigr]\, q_0
+\mathbb E\bigl[C \, | \,c_1^*< C\le c_2^*\bigr]\,
q_1 +\cdots+\mathbb E\bigl[C \, | \,C> c_m^*\bigr]\,
q_m
\\
&\quad=\mathbb E[C], \end{split} %
\end{equation*}
the right-hand side of~\eqref{eq:9} is less than 1. Thus,
combining~\eqref{eq:9} and the inequality $\alpha_l\le1-1/r_l$,
which follows immediately from Assumption~\ref{assum:1}, gives~\eqref{eq:8}.
\end{proof}

\begin{lemma}
\label{lem:2}
Let condition~\eqref{eq:8} hold. Then for any fixed $l$ such that
$s_0\le l\le s$, we can always choose $d_{l,0}$, $d_{l,1}$,
$\ldots$\,, $d_{l,m}$ satisfying assertions \emph{(i)} and \emph{(ii)} of
Assumption~\emph{\ref{assum:2}} such that~\eqref{eq:6} is true.
\end{lemma}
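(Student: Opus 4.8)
The plan is to produce a continuous one-parameter family of admissible deductible vectors that starts at the zero vector and ends at the configuration maximizing the right-hand side of~\eqref{eq:6}, and then to apply the intermediate value theorem, using~\eqref{eq:8} to guarantee that $\alpha_l\mathbb{E}[C]$ lies in the interval of values swept out.

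Fix $l$ with $s_0\le l\le s$. For $t\in[0,c_m^*]$ I would set $d_{l,0}(t)=d_{l,1}(t)=\min\{t,c_1^*\}$ and $d_{l,i}(t)=\min\{t,c_i^*\}$ for $2\le i\le m$. First I would check that the vector $(d_{l,0}(t),\ldots,d_{l,m}(t))$ satisfies assertions~(i) and~(ii) of Assumption~\ref{assum:2} for every $t$: the two-sided bounds are immediate from $0\le\min\{t,c_i^*\}\le c_i^*$, and the ordering $d_{l,0}(t)\le d_{l,1}(t)\le\cdots\le d_{l,m}(t)$ follows from $c_1^*\le c_2^*\le\cdots\le c_m^*$ together with the fact that $u\mapsto\min\{t,u\}$ is non-decreasing. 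Hence the entire path stays inside the feasible region cut out by~(i) and~(ii).

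Let $g(t)$ denote the right-hand side of~\eqref{eq:6} evaluated at $(d_{l,0}(t),\ldots,d_{l,m}(t))$. Since each coordinate $d_{l,i}(t)$ is continuous and non-decreasing in $t$, Lemma~\ref{lem:1} and the continuity remark following it show that $g$ is continuous (indeed non-decreasing) on $[0,c_m^*]$. At $t=0$ all deductibles vanish, so by continuity of $F_C$ one gets $g(0)=0\le\alpha_l\mathbb{E}[C]$ (recall $\alpha_l\ge0$ and $\mathbb{E}[C]>0$). At $t=c_m^*$ we have $d_{l,0}(c_m^*)=d_{l,1}(c_m^*)=c_1^*$ and $d_{l,i}(c_m^*)=c_i^*$ for $i\ge2$; since $\mathbb{P}[C\le c_1^*]=q_0$, the term $d_{l,0}(q_0-\mathbb{P}[C\le d_{l,0}])$ vanishes and $g(c_m^*)=\mathbb{E}[C\,|\,C\le c_1^*]\,q_0+c_1^*q_1+\cdots+c_m^*q_m=f(c_1^*,\ldots,c_m^*)$, which by~\eqref{eq:8} is at least $\alpha_l\mathbb{E}[C]$.

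Thus $\alpha_l\mathbb{E}[C]\in[g(0),g(c_m^*)]$, and the intermediate value theorem yields $t^*\in[0,c_m^*]$ with $g(t^*)=\alpha_l\mathbb{E}[C]$; the vector $d_{l,i}=d_{l,i}(t^*)$ then satisfies~(i),~(ii) and~\eqref{eq:6}. I expect the only delicate point to be the construction of the admissible path: it must respect the box constraints in~(i) and the ordering in~(ii) simultaneously while still reaching the maximizing configuration used in the proof of Theorem~\ref{thm:1}; the ``water-filling'' parametrization above is tailored to do precisely this, after which only continuity and the intermediate value theorem are needed.
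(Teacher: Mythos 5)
Your proposal is correct: the paper itself gives no argument for this lemma (it is dismissed as ``evident''), and your water-filling path $d_{l,0}(t)=d_{l,1}(t)=\min\{t,c_1^*\}$, $d_{l,i}(t)=\min\{t,c_i^*\}$ combined with the monotonicity/continuity of the right-hand side of~\eqref{eq:6} (Lemma~\ref{lem:1}) and the intermediate value theorem is exactly the reasoning the paper implicitly relies on, and it mirrors the device used explicitly in the proof of Theorem~\ref{thm:3}. The only cosmetic point is that $g(0)=0$ follows from $C$ being nonnegative (so $\mathbb{E}[C\mathbf{1}_{\{C\le 0\}}]=0$) rather than from continuity of $F_C$, which is needed instead for the continuity of $g$.
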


The assertion of the lemma is evident.

\begin{remark}
\label{rem:1}
Note only that if
\[
\min \biggl\{ 1-\frac{1}{r_l}, \, \frac{f(c_1^*, c_2^*, \ldots,
c_m^*)}{\mathbb E[C]} \biggr\} =
\frac{f(c_1^*, c_2^*, \ldots, c_m^*)}{\mathbb E[C]},
\]
i.e.
\[
\alpha_l =\frac{f(c_1^*, c_2^*, \ldots, c_m^*)}{\mathbb E[C]},
\]
then the combination is unique: $d_{l,0}=c_1^*$, $d_{l,1}=c_1^*$,
$d_{l,2}=c_2^*$, $\ldots$, $d_{l,m}=c_m^*$. Otherwise, if
\[
\alpha_l <\frac{f(c_1^*, c_2^*, \ldots, c_m^*)}{\mathbb E[C]},
\]
then there are infinitely many such combinations.
\end{remark}

Next, note that by Lemma~\ref{lem:2}, we can choose $d_{l,0}$,
$d_{l,1}$, $\ldots$, $d_{l,m}$ for any fixed $l$,
but assertion (iii) of Assumption~\ref{assum:2} may not hold.\vadjust{\goodbreak}

The next theorem shows that we can always replace the bonus--malus
system described in Section~\ref{sec:2}
by the bonus--malus system with varying deductibles.

\begin{thm}
\label{thm:2}
There is always a solution to~\eqref{eq:6}, i.e. a combination of
$\alpha_l$, $d_{l,0}$, $d_{l,1}$, $\ldots$\,, $d_{l,m}$,
where $s_0\le l\le s$, such that Assumptions~\ref{assum:1} and~\ref
{assum:2} hold.
\end{thm}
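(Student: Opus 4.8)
The plan is to decouple the two families of unknowns. First I would fix, once and for all, an admissible choice of the softening parameters $\alpha_l$, and then choose all the deductibles $d_{l,i}$ simultaneously by tracing a single monotone curve in the space of admissible deductible vectors; this last device is what makes assertion~(iii) of Assumption~\ref{assum:2} automatic. Put $\beta=f(c_1^*,c_2^*,\ldots,c_m^*)/\mathbb E[C]$; the proof of Theorem~\ref{thm:1} shows $\beta<1$, and clearly $\beta>0$. I would take
\[
\alpha_l=\min\Bigl\{1-\frac{1}{r_l},\ \beta\Bigr\},\qquad s_0\le l\le s.
\]
Since the relativities increase with the level and $r_l>1$ on the malus zone, $1-1/r_l$ is non-negative and non-decreasing in $l$; hence $\alpha_l$ is non-decreasing in $l$, $0\le\alpha_l\le1-1/r_l$, and $(1-\alpha_l)r_l=\max\{1,(1-\beta)r_l\}$ is non-decreasing in $l$ and at least $1$, so Assumption~\ref{assum:1} holds. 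The only further property I shall use is $\alpha_l\mathbb E[C]=\min\{(1-1/r_l)\mathbb E[C],\,f(c_1^*,\ldots,c_m^*)\}\le f(c_1^*,\ldots,c_m^*)$.

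To build the deductibles, write $G(d_0,\ldots,d_m)$ for the right-hand side of~\eqref{eq:6}. By Lemma~\ref{lem:1} and the continuity of $F_C$, $G$ is continuous and non-decreasing in each argument, $G(0,\ldots,0)=0$, and $G(c_1^*,c_1^*,c_2^*,\ldots,c_m^*)=f(c_1^*,\ldots,c_m^*)$. I would exhibit a continuous, coordinatewise non-decreasing path from $(0,\ldots,0)$ to $(c_1^*,c_1^*,c_2^*,\ldots,c_m^*)$ that stays inside the region cut out by assertions~(i)--(ii) of Assumption~\ref{assum:2}: raise $d_0,d_1,\ldots,d_m$ together from $0$ to $c_1^*$; then hold $d_0,d_1$ fixed at $c_1^*$ and raise $d_2,\ldots,d_m$ from $c_1^*$ to $c_2^*$; then raise $d_3,\ldots,d_m$ from $c_2^*$ to $c_3^*$; and so on, finally raising $d_m$ alone from $c_{m-1}^*$ to $c_m^*$. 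That each intermediate vector satisfies assertions~(i)--(ii) is immediate from $c_1^*<c_2^*<\cdots<c_m^*$. Along the path $G$ is continuous and non-decreasing and runs from $0$ to $f(c_1^*,\ldots,c_m^*)$, so by the intermediate value theorem it attains every value in $[0,f(c_1^*,\ldots,c_m^*)]$. For each $l$ I would let $(d_{l,0},\ldots,d_{l,m})$ be the first point of the path at which $G$ equals $\alpha_l\mathbb E[C]$, which is well defined because $0\le\alpha_l\mathbb E[C]\le f(c_1^*,\ldots,c_m^*)$. Then~\eqref{eq:6} holds by construction; assertions~(i)--(ii) of Assumption~\ref{assum:2} hold since the whole path lies in that region; and, because $\alpha_l\mathbb E[C]$ is non-decreasing in $l$ and the path is coordinatewise non-decreasing, the selected points increase coordinatewise with $l$, which is precisely assertion~(iii). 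Equivalently, one may proceed by induction on $l$, with $(d_{s_0-1,0},\ldots,d_{s_0-1,m}):=(0,\ldots,0)$, at step $l$ walking along such a feasible coordinate-monotone path from $(d_{l-1,0},\ldots,d_{l-1,m})$ towards $(c_1^*,c_1^*,c_2^*,\ldots,c_m^*)$ and stopping when $G$ reaches $\alpha_l\mathbb E[C]$; the stopping point exists since $\alpha_{l-1}\mathbb E[C]\le\alpha_l\mathbb E[C]\le f(c_1^*,\ldots,c_m^*)$.

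The one genuinely delicate ingredient is assertion~(iii) of Assumption~\ref{assum:2}: Lemma~\ref{lem:2} already solves~\eqref{eq:6} level by level together with assertions~(i)--(ii), but those per-level choices need not be monotone across levels. Prescribing a single coordinate-monotone route through the admissible region and reading off each level's deductible vector at its ``budget'' $\alpha_l\mathbb E[C]$ is exactly what forces the cross-level monotonicity; note that Proposition~\ref{propos:2} shows the monotonicity of $l\mapsto\alpha_l$ built into the choice above is unavoidable in any case. The remaining verifications — that every leg of the route stays inside the region of assertions~(i)--(ii), and that $G$ climbs continuously from $0$ to $f(c_1^*,\ldots,c_m^*)$ along it — are routine.
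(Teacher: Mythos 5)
Your argument is correct, but it follows a genuinely different route from the paper's. The paper proves Theorem~\ref{thm:2} by the simplest possible construction: it sets a single value $\alpha_{s_0}=\min\{1-1/r_{s_0},\,f(c_1^*,\ldots,c_m^*)/\mathbb E[C]\}$ (split into two cases according to which term attains the minimum, using Lemma~\ref{lem:2} and Remark~\ref{rem:1} at level $s_0$), and then simply copies this choice to every higher level, $\alpha_l=\alpha_{s_0}$ and $d_{l,i}=d_{s_0,i}$ for $s_0+1\le l\le s$, so that assertion~(iii) of Assumption~\ref{assum:2} and the monotonicity in Assumption~\ref{assum:1} hold trivially (the latter because the $r_l$ are taken to be non-decreasing on the malus zone, an implicit premise your proof shares). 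You instead take the level-dependent, maximal admissible values $\alpha_l=\min\{1-1/r_l,\,\beta\}$ and obtain cross-level monotonicity of the deductibles by evaluating the right-hand side of~\eqref{eq:6} along a fixed coordinatewise non-decreasing path from $(0,\ldots,0)$ to $(c_1^*,c_1^*,c_2^*,\ldots,c_m^*)$ inside the region of assertions~(i)--(ii), and selecting each level's vector as the first point where the (continuous, non-decreasing, by Lemma~\ref{lem:1}) function reaches the budget $\alpha_l\mathbb E[C]$; since the budgets are non-decreasing in $l$, the selected points are coordinatewise non-decreasing, which is assertion~(iii). Your construction is more work but buys more: it yields a genuinely level-varying system (non-constant $\alpha_l$ and deductibles that increase with the level) rather than the paper's degenerate solution in which all malus levels share one $\alpha$ and one deductible vector, and it isolates cleanly the one delicate point (cross-level monotonicity) that Lemma~\ref{lem:2} alone does not give. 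The only minor caveats are cosmetic: you should note explicitly that $G(0,\ldots,0)=0$ uses $C\ge 0$ a.s.\ together with the continuity of $F_C$, and that the monotonicity of $l\mapsto r_l$ on the malus zone is being assumed — but the paper's own proof needs exactly the same facts.
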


\begin{proof}
Consider now two cases.

1) If
\[
\min \biggl\{ 1-\frac{1}{r_{s_0}}, \, \frac{f(c_1^*, c_2^*, \ldots,
c_m^*)}{\mathbb E[C]} \biggr\} =
\frac{f(c_1^*, c_2^*, \ldots, c_m^*)}{\mathbb E[C]},
\]
let
\[
\alpha_{s_0} =\frac{f(c_1^*, c_2^*, \ldots, c_m^*)}{\mathbb E[C]}.
\]

By Lemma~\ref{lem:2} and Remark~\ref{rem:1} applied to $l=s_0$, we get
$d_{s_0,0}=c_1^*$, $d_{s_0,1}=c_1^*$, $d_{s_0,2}=c_2^*$, $\ldots$\,,
$d_{s_0,m}=c_m^*$.
To satisfy Assumptions~\ref{assum:1} and~\ref{assum:2}, for all $l$
such that $s_0+1\le l\le s$, we set
$\alpha_l=\alpha_{s_0}$, $d_{l,0}=d_{s_0,0}$, $d_{l,1}=d_{s_0,1}$,
$\ldots$\,, $d_{l,m}=d_{s_0,m}$,
which proves the theorem in the first case.

2) If
\[
\min \biggl\{ 1-\frac{1}{r_{s_0}}, \, \frac{f(c_1^*, c_2^*, \ldots,
c_m^*)}{\mathbb E[C]} \biggr\} =1-
\frac{1}{r_{s_0}},
\]
let
\[
\alpha_{s_0} =1-\frac{1}{r_{s_0}}.
\]

By Lemma~\ref{lem:2} and Remark~\ref{rem:1} applied to $l=s_0$, we can
always choose
required $d_{s_0,0}$, $d_{s_0,1}$, $\ldots$, $d_{s_0,m}$
and there are infinitely many such combinations. We take any of them.
Finally, to satisfy
Assumptions~\ref{assum:1} and~\ref{assum:2}, for all $l$ such that
$s_0+1\le l\le s$, we also set
$\alpha_l=\alpha_{s_0}$, $d_{l,0}=d_{s_0,0}$, $d_{l,1}=d_{s_0,1}$,
$\ldots$\,, $d_{l,m}=d_{s_0,m}$,
which proves the theorem in the second case.
\end{proof}

The considerations given above show that the replacement of the
bonus--malus system described in Section~\ref{sec:2}
(without deductibles) by the bonus--malus system with varying
deductibles is not unique. So the insurance company
can consider different replacements and choose one that seems to be
more attractive from the policyholders' point of view.

We now consider two special cases. Another one is considered in
Section~\ref{sec:4}.

\begin{exam}
\label{exam:1}
\emph{We now suppose that deductibles are applied only to claims of
type $m$, i.e. $d_{l,i}=0$ for all $s_0\le l\le s$ and
$0\le i\le m-1$. Therefore, \eqref{eq:6} can be rewritten as}
\begin{equation}
\label{eq:10} \alpha_l \mathbb E[C] =d_{l,m}
q_m.
\end{equation}

\emph{Taking into account Theorem~\ref{thm:1}, we take any $\alpha
_{s_0}$ such that}
\[
0< \alpha_{s_0}\le\min \biggl\{ 1-\frac{1}{r_{s_0}}, \,
\frac{c_m^*
q_m}{\mathbb E[C]} \biggr\}.
\]

\emph{By~\eqref{eq:10}, $d_{s_0,m} =\alpha_{s_0} \mathbb E[C]/q_m$. To
satisfy Assumptions~\ref{assum:1} and~\ref{assum:2},
we set $\alpha_l=\alpha_{s_0}$ and $d_{l,m}=d_{s_0,m}$ for all
$s_0+1\le l\le s$.
Thus, we get a simple replacement to the bonus--malus system.}
\end{exam}

\begin{exam}
\label{exam:2}
\emph{Let $\alpha_l =1-1/r_l$ for all $s_0\le l\le s$. By Theorem~\ref
{thm:1}, if}
\[
\frac{f(c_1^*, c_2^*, \ldots, c_m^*)}{\mathbb E[C]} <1-\frac{1}{r_{s_0}},
\]
\emph{then~\eqref{eq:6} has no suitable solution. So the bonus--malus
system cannot be replaced in this way.}
\end{exam}

\section{The case where deductibles are applied only to the claims
reported by\hfill\break policyholders
occupying the highest level in the bonus--malus scale}
\label{sec:4}

We now consider the special case where $d_{l,i}=0$ for all $s_0\le l\le
s-1$ and $0\le i\le m$. This case is of great interest
when the premium relativity for policyholders who are at level $s$ is
high enough and varying deductibles are applied
to soften the bonus--malus system for such policyholders. To meet the
conditions of Assumption~\ref{assum:1}, we require that
$(1-\alpha_s)r_s\ge r_{s-1}$, which gives $\alpha_s\le1-r_{s-1}/r_s$.
So we can choose any positive $\alpha_s$ such that
\begin{equation}
\label{eq:11} \alpha_s\le\min \biggl\{ 1-\frac{r_{s-1}}{r_s}, \,
\frac{f(c_1^*,
c_2^*, \ldots, c_m^*)}{\mathbb E[C]} \biggr\}
\end{equation}
and then find $d_{s,0}$, $d_{s,1}$, $\ldots$, $d_{s,m}$ from
equation~\eqref{eq:6} applied to $l=s$, i.e. from
\begin{equation}
\label{eq:12} %
\begin{split} \alpha_s \mathbb E[C] &=
\mathbb E[C \, | \,C\le d_{s,0}]\, \mathbb P[C\le d_{s,0}]
+d_{s,0}\, \bigl( q_0-\mathbb P[C\le d_{s,0}]
\bigr)
\\
&\quad+d_{s,1} q_1 +\cdots+d_{s,m}
q_m. \end{split} %
\end{equation}

If inequality~\eqref{eq:11} is strict, then there are infinitely many
solutions to~\eqref{eq:12}.
We now consider two allocation principles for the deductibles.

The \emph{first principle} is when the deductibles are proportional to
the average claims of each type. This principle seems to be natural
and fair for policyholders, but that is not always possible, which is
easily seen from the next theorem.

\begin{thm}
\label{thm:3}
Let $\alpha_s$ be such that inequality~\eqref{eq:11} is strict and set
\begin{equation}
\label{eq:13} x_0 \!=\! \min \biggl\{\! \frac{c_1^*}{\mathbb E[C \, | \,c_1^*< C\le c_2^*]},
\frac{c_2^*}{\mathbb E[C \, | \,c_2^*< C\le c_3^*]}, \dots, \frac{c_m^*}{\mathbb E[C \, | \, C> c_m^*]} \! \biggr\}.
\end{equation}
If
\begin{align}
 \alpha_s \mathbb E[C] &>
\mathbb E \bigl[ C \, | \,C\le x_0\, \mathbb E\bigl[C \, | \,C\le
c_1^*\bigr] \bigr]\: \mathbb P \bigl[ C\le x_0\,
\mathbb E\bigl[C \, | \,C\le c_1^*\bigr] \bigr]\notag
\\
&\quad+x_0 \bigl( \mathbb E\bigl[C \, | \,C\le c_1^*
\bigr]\, \bigl( q_0- \mathbb P \bigl[ C\le x_0\, \mathbb
E\bigl[C \, | \,C\le c_1^*\bigr] \bigr] \bigr)\notag
\\
&\quad+\mathbb E\bigl[C \, | \,c_1^*< C\le c_2^*\bigr]
\, q_1 +\cdots+\mathbb E\bigl[C \, | \,C> c_m^*\bigr]\,
q_m \bigr),
\label{eq:14}
\end{align} %
then we cannot allocate deductibles proportionally to the average
claims.

Otherwise, if
\begin{align}
 \alpha_s \mathbb E[C] &\le
\mathbb E \bigl[ C \, | \,C\le x_0\, \mathbb E\bigl[C \, | \,C\le
c_1^*\bigr] \bigr]\: \mathbb P \bigl[ C\le x_0\,
\mathbb E\bigl[C \, | \,C\le c_1^*\bigr] \bigr]\notag
\\
&\quad+x_0 \bigl( \mathbb E\bigl[C \, | \,C\le c_1^*
\bigr]\, \bigl( q_0- \mathbb P \bigl[ C\le x_0\, \mathbb
E\bigl[C \, | \,C\le c_1^*\bigr] \bigr] \bigr)\notag
\\
&\quad+\mathbb E\bigl[C \, | \,c_1^*< C\le c_2^*\bigr]
\, q_1 +\cdots+\mathbb E\bigl[C \, | \,C> c_m^*\bigr]\,
q_m \bigr),
\label{eq:15} %
\end{align}
then that is possible and~\eqref{eq:12} has a unique solution of such
kind 
expressed by
\begin{equation}
\label{eq:16} %
\begin{split} d_{s,0}&=x\, \mathbb E\bigl[C
\, | \,C\le c_1^*\bigr], \qquad d_{s,1}=x\, \mathbb E\bigl[C
\, | \,c_1^*< C\le c_2^*\bigr], \quad\ldots,
\\
d_{s,m}&=x\, \mathbb E\bigl[C \, | \,C> c_m^*\bigr],
\end{split} %
\end{equation}
where $x$ is a unique positive solution to the equation
\begin{align}
 \alpha_s \mathbb E[C] &=
\mathbb E \bigl[ C \, | \,C\le x\, \mathbb E\bigl[C \, | \,C\le c_1^*
\bigr] \bigr]\: \mathbb P \bigl[ C\le x\, \mathbb E\bigl[C \, | \,C\le
c_1^*\bigr] \bigr]\notag
\\
&\quad+x \bigl( \mathbb E\bigl[C \, | \,C\le c_1^*\bigr]\, \bigl(
q_0 -\mathbb P \bigl[ C\le x\, \mathbb E\bigl[C \, | \,C\le
c_1^*\bigr] \bigr] \bigr)\notag
\\
&\quad+\mathbb E\bigl[C \, | \,c_1^*< C\le c_2^*\bigr]
\, q_1 +\cdots+\mathbb E\bigl[C \, | \,C> c_m^*\bigr]\,
q_m \bigr).
\label{eq:17}
\end{align}
\end{thm}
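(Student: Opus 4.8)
The plan is to restrict equation~\eqref{eq:12} to proportional deductibles, so that it reduces to a single scalar equation in the proportionality factor $x$, and then to invoke the intermediate value theorem. For $x\ge0$ I would set $d_{s,0}(x)=x\,\mathbb E[C\,|\,C\le c_1^*]$, $d_{s,1}(x)=x\,\mathbb E[C\,|\,c_1^*<C\le c_2^*]$, $\ldots$, $d_{s,m}(x)=x\,\mathbb E[C\,|\,C>c_m^*]$, and let $g(x)$ denote the right-hand side of~\eqref{eq:6} (equivalently of~\eqref{eq:12}) with each $d_{s,i}$ replaced by $d_{s,i}(x)$; thus $g(x)$ is exactly the right-hand side of~\eqref{eq:17}. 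The key observation is that a proportional allocation of the form~\eqref{eq:16} satisfying Assumption~\ref{assum:2} and solving~\eqref{eq:12} exists if and only if there is an admissible $x$ with $g(x)=\alpha_s\mathbb E[C]$; note also that $x=0$ never qualifies, since $\alpha_s>0$.

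First I would identify the admissible range of $x$. Feeding $d_{s,i}=d_{s,i}(x)$ into assertion~(i) of Assumption~\ref{assum:2} yields the inequalities $x\,\mathbb E[C\,|\,C\le c_1^*]\le c_1^*$, $x\,\mathbb E[C\,|\,c_1^*<C\le c_2^*]\le c_1^*$, $\ldots$, $x\,\mathbb E[C\,|\,C>c_m^*]\le c_m^*$. Because $F_C$ is continuous and $q_0,q_1>0$, we have $\mathbb E[C\,|\,C\le c_1^*]<c_1^*<\mathbb E[C\,|\,c_1^*<C\le c_2^*]$, so the first inequality is implied by the second; hence the admissible set is exactly $[0,x_0]$ with $x_0$ as in~\eqref{eq:13}, and $0<x_0<\infty$ since $\mathbb E[C]<\infty$ and the $c_i^*$ are positive. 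Assertion~(ii) of Assumption~\ref{assum:2}, i.e.\ $d_{s,0}(x)\le\cdots\le d_{s,m}(x)$, holds automatically for every $x\ge0$ because $\mathbb E[C\,|\,\text{type }i]\le c_{i+1}^*\le\mathbb E[C\,|\,\text{type }i+1]$, and assertion~(iii) is vacuous here since $d_{l,i}=0$ for $s_0\le l\le s-1$. So the only binding requirement on a proportional allocation is $x\in[0,x_0]$.

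Next I would record the behaviour of $g$ on $[0,x_0]$: $g(0)=0$ (all deductibles vanish); $g$ is continuous, because the right-hand side of~\eqref{eq:6} is continuous in $d_{s,0},\ldots,d_{s,m}$ (as observed right after Lemma~\ref{lem:1}) and each $d_{s,i}(x)$ is linear in $x$; and $g$ is strictly increasing, because by Lemma~\ref{lem:1} the right-hand side of~\eqref{eq:6} is non-decreasing in each $d_{s,i}$ and every $d_{s,i}(\cdot)$ is non-decreasing, while the single term $d_{s,m}(x)\,q_m=x\,\mathbb E[C\,|\,C>c_m^*]\,q_m$ is already strictly increasing in $x$. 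In particular $\max_{[0,x_0]}g=g(x_0)$, and $g(x_0)$ coincides with the common right-hand side of~\eqref{eq:14} and~\eqref{eq:15}. The two cases are then immediate. If~\eqref{eq:14} holds, then $\alpha_s\mathbb E[C]>g(x_0)=\max_{[0,x_0]}g$, so $g(x)=\alpha_s\mathbb E[C]$ has no admissible solution, and by the reduction no proportional allocation is possible. If~\eqref{eq:15} holds, then $0=g(0)<\alpha_s\mathbb E[C]\le g(x_0)$, so by continuity and strict monotonicity of $g$ there is a unique $x\in(0,x_0]$ with $g(x)=\alpha_s\mathbb E[C]$, which is precisely~\eqref{eq:17}; the associated deductibles~\eqref{eq:16} then solve~\eqref{eq:12} and satisfy all three assertions of Assumption~\ref{assum:2}, and uniqueness of $x$ gives uniqueness of the proportional allocation.

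The step I expect to require the most care is the reduction itself, in two respects: checking that the type-$0$ constraint in Assumption~\ref{assum:2}(i) is genuinely subsumed by $x\le x_0$ (which needs $\mathbb E[C\,|\,C\le c_1^*]<c_1^*$, from the continuity of $F_C$), and making sure the term $\mathbb E[C\,|\,C\le d_{s,0}]\,\mathbb P[C\le d_{s,0}]$ is treated correctly in the continuity and monotonicity claims for $g$. Both are already covered by Lemma~\ref{lem:1} and the continuity of $F_C$, so beyond that the argument is routine bookkeeping.
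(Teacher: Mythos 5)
Your proposal is correct and follows essentially the same route as the paper: restrict to proportional deductibles, reduce \eqref{eq:12} to the scalar equation \eqref{eq:17}, show the admissible range is $[0,x_0]$ via Assumption~\ref{assum:2}(i), and conclude by monotonicity (Lemma~\ref{lem:1}) and continuity of the right-hand side on $[0,x_0]$ together with the comparison of $\alpha_s\mathbb E[C]$ with its value at $x_0$. The extra details you supply (strictness of the increase via the $q_m$ term, the type-$0$ constraint being subsumed by $x\le x_0$, and assertion~(ii) holding automatically) are left implicit in the paper but do not change the argument.
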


\begin{proof}
Let $x$ be a proportionality coefficient, i.e. we have~\eqref{eq:16}.
To meet the assertion (i) of Assumption~\ref{assum:2}, we require that
\begin{equation*}
\begin{split} &x\, \mathbb E\bigl[C \, | \,C\le c_1^*
\bigr]\le c_1^*, \qquad x\, \mathbb E\bigl[C \, |
\,c_1^*< C\le c_2^*\bigr]\le c_1^*,
\\
&x\, \mathbb E\bigl[C \, | \,c_2^*< C\le c_3^*\bigr]\le
c_2^*, \quad\ldots, \quad x\, \mathbb E\bigl[C \, | \,C>
c_m^*\bigr]\le c_m^*, \end{split} %
\end{equation*}
which is equivalent to
\[
x \le\min \biggl\{ \frac{c_1^*}{\mathbb E[C \, | \,c_1^*< C\le c_2^*]}, \frac{c_2^*}{\mathbb E[C \, | \,c_2^*< C\le c_3^*]}, \dots,
\frac{c_m^*}{\mathbb E[C \, | \, C> c_m^*]} \biggr\}.
\]

So we suppose that $x\in[0,x_0]$, where $x_0$ is given by~\eqref
{eq:13}, and substitute~\eqref{eq:16}
into~\eqref{eq:12}, which yields~\eqref{eq:17}.

The left-hand side of~\eqref{eq:17} is a positive constant. By
Lemma~\ref{lem:1}, the right-hand side of~\eqref{eq:17} is
an increasing function of $x$ on $[0,x_0]$. By the continuity of $F_C$,
it is continuous in $x$ on $[0,x_0]$.
Moreover, this function is equal to 0 as $x=0$. Therefore, if~\eqref
{eq:15} holds, then~\eqref{eq:17} has
a unique solution $x\in[0,x_0]$. Otherwise, if~\eqref{eq:14} is true,
then there is
no solution $x\in[0,x_0]$ to~\eqref{eq:17}, which completes the proof.
\end{proof}

\begin{remark}
\label{rem:2}
Equation~\eqref{eq:17} is not solvable analytically in the general
case. To find $x$, we should use numerical methods.
\end{remark}

The \emph{second principle} implies that large claims are penalized by
means of deductibles strictly and small claims are not
penalized at all (if that is possible) or at least not so strictly.
Anyway~\eqref{eq:12} must hold.
Firstly, we check if it is possible to penalize only claims of type $m$.
To this end, we set $d_{s,0} =d_{s,1} =\cdots=d_{s,m-1} =0$, $d_{s,m}
=c_m^*$ and substitute this into~\eqref{eq:12}. If
\[
\alpha_s \mathbb E[C]\le c_m^* q_m,
\]
that is possible and the desired allocation is given by
\[
d_{s,0} =d_{s,1} =\cdots=d_{s,m-1} =0 \qquad \text{and}
\qquad d_{s,m} =\alpha_s \mathbb E[C]/q_m.
\]
Otherwise, if
\[
\alpha_s \mathbb E[C]> c_m^* q_m,
\]
we also have to penalize at least claims of type $m-1$. We set $d_{s,0}
=d_{s,1} =\cdots=d_{s,m-2} =0$,
$d_{s,m-1} =c_{m-1}^*$, $d_{s,m} =c_m^*$ and substitute this into~\eqref
{eq:12}. If
\[
\alpha_s \mathbb E[C]\le c_{m-1}^* q_{m-1}
+c_m^* q_m,
\]
then the desired allocation is given by
\begin{align*}
d_{s,0} &=d_{s,1} =\cdots=d_{s,m-2} =0, \\
d_{s,m-1} &=\frac{\alpha
_s \mathbb E[C] -c_m^* q_m}{q_{m-1}} \quad \text{and} \quad d_{s,m}
=c_m^*.
\end{align*}
Otherwise, we also have to penalize at least claims of type $m-2$. So
we set $d_{s,0} =d_{s,1} =\cdots=d_{s,m-3} =0$,
$d_{s,m-2} =c_{m-2}^*$, $d_{s,m-1} =c_{m-1}^*$, $d_{s,m} =c_m^*$,
substitute this into~\eqref{eq:12} and continue
in this way until we get the desired allocation. Note that such an
allocation is always possible provided that
inequality~\eqref{eq:11} holds.

\section{Numerical illustrations}
\label{sec:5}

We now consider the scale with 4 levels (numbered from 0 to 3) and 4
claim types (numbered from 0 to 3 with probabilities
$q_0$, $q_1$, $q_2$ and $q_3$). If no claims are reported during the
current year, the policyholder moves one level down.
Each claim of type 0 is penalized by one level, each claim of type 1 is
penalized by 2 levels, each claim of type 2 or 3
is penalized by 3 levels. Note that penalties for claims of types 2 and
3 are different due to varying deductibles.
For instance, if 2 claims of type 0 are reported during the year, then
the policyholder moves 2 levels up; if 1 claim of type 0
and 1 claim of type 1 are reported, then the policyholder moves 3
levels up, i.e. goes at the highest level anyway.

The elements of the one-step transition matrix for a policyholder with
annual mean claim frequency $\lambda\theta$
and vector of probabilities $\mathbf{q} =(q_0, q_1, q_2, q_3)^T$ are
calculated using formula~\eqref{eq:1}.
Thus, the one-step transition matrix is given by
\begin{equation*}
\begin{split} &P(\lambda\theta;\mathbf{q})
\\
&= %
\begin{pmatrix}
\scriptstyle e^{-\lambda\theta} & \scriptstyle\lambda\theta q_0
e^{-\lambda\theta}
& \scriptstyle\lambda\theta q_1 e^{-\lambda\theta} +\frac{(\lambda
\theta q_0)^2}{2} e^{-\lambda\theta}
& \scriptstyle1-e^{-\lambda\theta} -\lambda\theta(q_0+q_1) e^{-\lambda
\theta} -\frac{(\lambda\theta q_0)^2}{2} e^{-\lambda\theta} \\
\scriptstyle e^{-\lambda\theta} & \scriptstyle0 & \scriptstyle\lambda
\theta q_0 e^{-\lambda\theta}
& \scriptstyle1-e^{-\lambda\theta} -\lambda\theta q_0 e^{-\lambda\theta
} \\
\scriptstyle0 & \scriptstyle e^{-\lambda\theta} & \scriptstyle0 &
\scriptstyle1-e^{-\lambda\theta} \\
\scriptstyle0 & \scriptstyle0 & \scriptstyle e^{-\lambda\theta} &
\scriptstyle1-e^{-\lambda\theta}
\end{pmatrix} %
\!\!. \end{split} %
\end{equation*}

Next, the stationary probabilities $\pi_l(\lambda\theta;\mathbf{q})$,
$0\le l\le3$, are calculated using formula~\eqref{eq:2}.
A standard computation shows that
\[
\pi_0(\lambda\theta;\mathbf{q}) =\frac{e^{-3\lambda\theta}}{\varDelta},
\]
\[
\pi_1(\lambda\theta;\mathbf{q}) =\frac{e^{-2\lambda\theta} -e^{-3\lambda
\theta}}{\varDelta},
\]
\[
\pi_2(\lambda\theta;\mathbf{q}) =\frac{e^{-\lambda\theta} -e^{-2\lambda
\theta} -\lambda\theta q_0 e^{-3\lambda\theta}}{\varDelta},
\]
\[
\pi_3(\lambda\theta;\mathbf{q}) =\frac{1-e^{-\lambda\theta} -2\lambda
\theta q_0 e^{-2\lambda\theta}
+\lambda\theta(q_0-q_1) e^{-3\lambda\theta} -\frac{(\lambda\theta
q_0)^2}{2} e^{-3\lambda\theta}}{\varDelta},
\]
where
\[
\varDelta=1-2\lambda\theta q_0 e^{-2\lambda\theta} -\lambda\theta
q_1 e^{-3\lambda\theta} -\frac{(\lambda\theta q_0)^2}{2} e^{-3\lambda\theta}.
\]

It is easily seen that
\[
e^{-3\lambda\theta}\ge0, \qquad e^{-2\lambda\theta} -e^{-3\lambda\theta
}\ge0 \quad
\text{and} \quad e^{-\lambda\theta} -e^{-2\lambda\theta} -\lambda\theta
q_0 e^{-3\lambda
\theta}\ge0
\]
for all $\lambda>0$, $\theta\ge0$ and $0<q_0<1$. Moreover, it is
evident that $\sum_{l=0}^{3} \pi_l(\lambda\theta;\mathbf{q})=1$.
So to see that $\pi_l(\lambda\theta;\mathbf{q})$, $0\le l\le3$, are
indeed probabilities, we must show that
\begin{equation}
\label{eq:18} 1-e^{-\lambda\theta} -2\lambda\theta q_0
e^{-2\lambda\theta} +\lambda\theta(q_0-q_1)
e^{-3\lambda\theta} -\frac{(\lambda\theta
q_0)^2}{2} e^{-3\lambda\theta} \ge0
\end{equation}
for all $\lambda>0$, $\theta\ge0$, $q_0>0$ and $q_1>0$ such that $q_0+q_1<1$.

It is easy to check that the minimum value of the left-hand side
of~\eqref{eq:18} with respect to $q_0\ge0$ and $q_1\ge0$
such that $q_0+q_1\le1$ is attained as $q_0=1$, $q_1=0$ and equal to
\begin{equation}
\label{eq:19} 1-e^{-\lambda\theta} -2\lambda\theta e^{-2\lambda\theta} +\lambda\theta
e^{-3\lambda\theta} -\frac{(\lambda\theta)^2}{2} e^{-3\lambda\theta}.
\end{equation}

Introduce the function
\[
h(y)= 1-e^{-y} -2y e^{-2y} +y e^{-3y} -
\frac{y^2}{2} e^{-3y}, \quad y\ge0.
\]

Taking the derivative yields
\[
h'(y)= e^{-3y} \bigl(e^y-1\bigr)^2
+2y\bigl(2e^{-2y} -e^{-3y}\bigr) +\frac{3y^2}{2}
e^{-3y}\ge0, \quad y\ge0.
\]

Therefore, $h(y)$ is non-decreasing and its minimum value is attained
as $y=0$ and equals to 0.
Consequently, the minimum value of~\eqref{eq:19} is also 0 as $\lambda
\theta=0$, which gives~\eqref{eq:18}.

In this section, we deal with exponentially distributed claim sizes
with mean $\mu>0$, i.e. the distribution function
$F_C$ of $C$ is equal to $F_C(y) =1-e^{-y/\mu}$, $y\ge0$. Thus, we have
\[
q_0 =1-e^{-c_1^*/\mu}, \qquad q_1 =e^{-c_1^*/\mu}
-e^{-c_2^*/\mu},
\]
\[
q_2 =e^{-c_2^*/\mu} -e^{-c_3^*/\mu}, \qquad q_3
=1-e^{-c_3^*/\mu};
\]
\begin{equation*}
\begin{split} \mathbb E\bigl[C \, | \,C\le c_1^*\bigr]
&=\frac{1}{q_0} \int_0^{c_1^*}
\frac
{y}{\mu}\, e^{-y/\mu}\, \mathrm{d}y =\frac{\mu(1-e^{-c_1^*/\mu}) -c_1^* e^{-c_1^*/\mu}}{1-e^{-c_1^*/\mu}}
\\
&=\mu-\frac{c_1^* e^{-c_1^*/\mu}}{1-e^{-c_1^*/\mu}}, \end{split} %
\end{equation*}
\begin{equation*}
\begin{split} \mathbb E\bigl[C \, | \,c_1^*< C\le
c_2^*\bigr] &=\frac{1}{q_1} \int_{c_1^*}^{c_2^*}
\frac{y}{\mu}\, e^{-y/\mu}\, \mathrm{d}y =\frac{(c_1^*+\mu)e^{-c_1^*/\mu} -(c_2^*+\mu)e^{-c_2^*/\mu
}}{e^{-c_1^*/\mu}-e^{-c_2^*/\mu}}
\\
&=\mu+\frac{c_1^* e^{-c_1^*/\mu} -c_2^* e^{-c_2^*/\mu}}{e^{-c_1^*/\mu
}-e^{-c_2^*/\mu}}, \end{split} %
\end{equation*}
\begin{equation*}
\begin{split} \mathbb E\bigl[C \, | \,c_2^*< C\le
c_3^*\bigr] &=\frac{1}{q_2} \int_{c_2^*}^{c_3^*}
\frac{y}{\mu}\, e^{-y/\mu}\, \mathrm{d}y =\frac{(c_2^*+\mu)e^{-c_2^*/\mu} -(c_3^*+\mu)e^{-c_3^*/\mu
}}{e^{-c_2^*/\mu}-e^{-c_3^*/\mu}}
\\
&=\mu+\frac{c_2^* e^{-c_2^*/\mu} -c_3^* e^{-c_3^*/\mu}}{e^{-c_2^*/\mu
}-e^{-c_3^*/\mu}}, \end{split} %
\end{equation*}
\[
\mathbb E\bigl[C \, | \,C> c_3^*\bigr] =\frac{1}{q_3} \int
_{c_3^*}^{+\infty} \frac
{y}{\mu}\, e^{-y/\mu}\,
\mathrm{d}y =\frac{(c_3^*+\mu) e^{-c_3^*/\mu}}{e^{-c_3^*/\mu}} =\mu+c_3^*;
\]
\begin{equation*}
\begin{split} f\bigl(c_1^*, c_2^*,
c_3^*\bigr) &=\mu-\mu e^{-c_1^*/\mu} -c_1^*
e^{-c_1^*/\mu} +c_1^* \bigl(e^{-c_1^*/\mu} -e^{-c_2^*/\mu}
\bigr)
\\
&\qquad+c_2^* \bigl(e^{-c_2^*/\mu} -e^{-c_3^*/\mu}\bigr)
+c_3^* e^{-c_3^*/\mu}
\\
&=\mu-\mu e^{-c_1^*/\mu} +\bigl(c_2^*-c_1^*\bigr)
e^{-c_2^*/\mu} +\bigl(c_3^*-c_2^*\bigr)
e^{-c_3^*/\mu}. \end{split} %
\end{equation*}

In addition, we suppose that $F_{\varTheta}(\theta) =1-e^{-\theta}$,
$\theta\ge0$.

\begin{exam}
\label{exam:3}
\emph{Let $\lambda=0.1$, $\mu=2$, $c_1^*=1$, $c_2^*=2$, $c_3^*=4$.
Then we have $q_0 \approx0.3935$, $q_1 \approx0.2387$, $q_2 \approx
0.2325$, $q_3 \approx0.1353$.
The corresponding values of $\pi_l$ and $r_l$, $0\le l\le3$, are
calculated using formulas~\eqref{eq:3} and~\eqref{eq:4},
respectively, and are given in Tables~\ref{table:1}--\ref{table:9}.}

\begin{table}[b!]
\tabcolsep=4pt
\caption{Bonus--malus system with varying deductibles for Example~\ref
{exam:3} where deductibles are applied only to claims
reported by policyholders occupying the highest level, $\alpha_3=0.05$
and the first principle is used}\label{table:1}
\begin{tabular*}{\textwidth}{@{\extracolsep{\fill
}}cD{.}{.}{1.4}D{.}{.}{1.4}D{.}{.}{5}D{.}{.}{2}D{.}{.}{1.11}D{.}{.}{4}D{.}{.}{4}D{.}{.}{4}D{.}{.}{4}@{}}
\hline
$l$ & \multicolumn{1}{l}{$\pi_l$} & \multicolumn{1}{l}{$r_l$} &
\multicolumn{1}{l}{$\lambda r_l \mathbb E[C]$} &
\multicolumn{1}{l}{$\alpha_l$} &
\multicolumn{1}{l}{$(1-\alpha_l)\lambda r_l \mathbb E[C]$} &
\multicolumn{1}{l}{$d_{l,0}$} &
\multicolumn{1}{l}{$d_{l,1}$} &
\multicolumn{1}{l}{$d_{l,2}$} &
\multicolumn{1}{l}{$d_{l,3}$} \\
\hline
3 &0.0508 &2.1844 &0.4369 &0.05 &0.4150 &0.0230 &0.0730 &0.1420
&0.3004\\
2 &0.0591 &1.8899 &0.3780 &0 &0.3780 &0 &0 &0 &0\\
1 &0.0716 &1.6543 &0.3309 &0 &0.3309 &0 &0 &0 &0\\
0 &0.8185 &0.8050 &0.1610 &0 &0.1610 &0 &0 &0 &0\\
\hline
\end{tabular*}
\end{table}

\begin{table}[b!]
\tabcolsep=4pt
\caption{Bonus--malus system with varying deductibles for Example~\ref
{exam:3} where deductibles are applied only to claims
reported by policyholders occupying the highest level, $\alpha_3=0.13$
and the first principle is used}\label{table:2}
\begin{tabular*}{\textwidth}{@{\extracolsep{\fill
}}c
lllllllll@{}}
\hline
$l$ & \multicolumn{1}{l}{$\pi_l$} & \multicolumn{1}{l}{$r_l$} &
\multicolumn{1}{l}{$\lambda r_l \mathbb E[C]$} &
\multicolumn{1}{l}{$\alpha_l$} &
\multicolumn{1}{l}{$(1-\alpha_l)\lambda r_l \mathbb E[C]$} &
\multicolumn{1}{l}{$d_{l,0}$} &
\multicolumn{1}{l}{$d_{l,1}$} &
\multicolumn{1}{l}{$d_{l,2}$} &
\multicolumn{1}{l}{$d_{l,3}$} \\
\hline
3 &0.0508 &2.1844 &0.4369 &0.13 &0.3801 &0.0598 &0.1903 &0.3699
&0.7827\\
2 &0.0591 &1.8899 &0.3780 &0 &0.3780 &0 &0 &0 &0\\
1 &0.0716 &1.6543 &0.3309 &0 &0.3309 &0 &0 &0 &0\\
0 &0.8185 &0.8050 &0.1610 &0 &0.1610 &0 &0 &0 &0\\
\hline
\end{tabular*}  
\end{table}

\emph{We first consider the case when deductibles are applied only to
the claims reported by policyholders occupying
the highest level in the bonus--malus system. By~\eqref{eq:11}, we have}
\begin{equation*}
\begin{split} \alpha_3 &\le\min \biggl\{ 1-
\frac{r_2}{r_3}, \, \frac{f(c_1^*, c_2^*,
c_3^*)}{\mu} \biggr\} \approx \biggl\{ 1-
\frac{1.8899}{2.1844}, \, \frac{1.425489}{2} \biggr\}
\\
&\approx\min\{0.1348, 0.7127\} =0.1348. \end{split} %
\end{equation*}

\emph{By~\eqref{eq:13}, we get $x_0=0.6667$. Taking $\alpha_3=0.05$ and
$\alpha_3=0.13$ shows that~\eqref{eq:15} is true in both cases.
Hence, we can allocate deductibles proportionally to the average\vadjust{\goodbreak}
claims. By~\eqref{eq:17}, the proportionality coefficient
$x=0.050066$ if $\alpha_3=0.05$ and $x=0.130443$ if $\alpha_3=0.13$.
The corresponding values of deductibles are calculated
using~\eqref{eq:16} and given in Tables~\ref{table:1} and~\ref{table:2}.}

\emph{If we apply the second principle considered in Section~\ref{sec:4}, we get $\alpha_3 \mu\le c_3^* q_3$ for both values of $\alpha_3$.
The desired allocation is presented in Tables~\ref{table:3} and~\ref{table:4}.}

\begin{table}[t!]
\caption{Bonus--malus system with varying deductibles for Example~\ref
{exam:3} where deductibles are applied only to claims
reported by policyholders occupying the highest level, $\alpha_3=0.05$
and the second principle is used}\label{table:3}
\begin{tabular*}{\textwidth}{@{\extracolsep{\fill
}}c
lllllllll@{}}
\hline
$l$ & \multicolumn{1}{l}{$\pi_l$} & \multicolumn{1}{l}{$r_l$} &
\multicolumn{1}{l}{$\lambda r_l \mathbb E[C]$} &
\multicolumn{1}{l}{$\alpha_l$} &
\multicolumn{1}{l}{$(1-\alpha_l)\lambda r_l \mathbb E[C]$} &
\multicolumn{1}{l}{$d_{l,0}$} &
\multicolumn{1}{l}{$d_{l,1}$} &
\multicolumn{1}{l}{$d_{l,2}$} &
\multicolumn{1}{l}{$d_{l,3}$} \\
\hline
3 &0.0508 &2.1844 &0.4369 &0.05 &0.4150 &0 &0 &0 &0.7389\\
2 &0.0591 &1.8899 &0.3780 &0 &0.3780 &0 &0 &0 &0\\
1 &0.0716 &1.6543 &0.3309 &0 &0.3309 &0 &0 &0 &0\\
0 &0.8185 &0.8050 &0.1610 &0 &0.1610 &0 &0 &0 &0\\
\hline
\end{tabular*} \vspace*{-2.5pt}
\end{table}

\begin{table}[t!]
\caption{Bonus--malus system with varying deductibles for Example~\ref
{exam:3} where deductibles are applied only to claims
reported by policyholders occupying the highest level, $\alpha_3=0.13$
and the second principle is used}\label{table:4}
\begin{tabular*}{\textwidth}{@{\extracolsep{\fill
}}c
lllllllll@{}}
\hline
$l$ & \multicolumn{1}{l}{$\pi_l$} & \multicolumn{1}{l}{$r_l$} &
\multicolumn{1}{l}{$\lambda r_l \mathbb E[C]$} &
\multicolumn{1}{l}{$\alpha_l$} &
\multicolumn{1}{l}{$(1-\alpha_l)\lambda r_l \mathbb E[C]$} &
\multicolumn{1}{l}{$d_{l,0}$} &
\multicolumn{1}{l}{$d_{l,1}$} &
\multicolumn{1}{l}{$d_{l,2}$} &
\multicolumn{1}{l}{$d_{l,3}$} \\
\hline
3 &0.0508 &2.1844 &0.4369 &0.05 &0.3801 &0 &0 &0 &1.9212\\
2 &0.0591 &1.8899 &0.3780 &0 &0.3780 &0 &0 &0 &0\\
1 &0.0716 &1.6543 &0.3309 &0 &0.3309 &0 &0 &0 &0\\
0 &0.8185 &0.8050 &0.1610 &0 &0.1610 &0 &0 &0 &0\\
\hline
\end{tabular*} \vspace*{-2.5pt}
\end{table}

\begin{table}[t!]
\caption{Bonus--malus system with varying deductibles for Example~\ref
{exam:3} where deductibles are applied only to claims
of type 3}\label{table:5}
\begin{tabular*}{\textwidth}{@{\extracolsep{\fill
}}c
lllllllll@{}}
\hline
$l$ & \multicolumn{1}{l}{$\pi_l$} & \multicolumn{1}{l}{$r_l$} &
\multicolumn{1}{l}{$\lambda r_l \mathbb E[C]$} &
\multicolumn{1}{l}{$\alpha_l$} &
\multicolumn{1}{l}{$(1-\alpha_l)\lambda r_l \mathbb E[C]$} &
\multicolumn{1}{l}{$d_{l,0}$} &
\multicolumn{1}{l}{$d_{l,1}$} &
\multicolumn{1}{l}{$d_{l,2}$} &
\multicolumn{1}{l}{$d_{l,3}$} \\
\hline
3 &0.0508 &2.1844 &0.4369 &0.24 &0.3320 &0 &0 &0 &3.5467\\
2 &0.0591 &1.8899 &0.3780 &0.13 &0.3288 &0 &0 &0 &1.9212\\
1 &0.0716 &1.6543 &0.3309 &0.06 &0.3110 &0 &0 &0 &0.8867\\
0 &0.8185 &0.8050 &0.1610 &0 &0.1610 &0 &0 &0 &0\\
\hline
\end{tabular*}  \vspace*{-2.5pt}
\end{table}

\begin{table}[t!]
\caption{Bonus--malus system with varying deductibles for Example~\ref{exam:3}
where deductibles are applied only to claims of type 3.
Larger values of $\alpha_l$}\label{table:6}
\begin{tabular*}{\textwidth}{@{\extracolsep{\fill
}}c
lllllllll@{}}
\hline
$l$ & \multicolumn{1}{l}{$\pi_l$} & \multicolumn{1}{l}{$r_l$} &
\multicolumn{1}{l}{$\lambda r_l \mathbb E[C]$} &
\multicolumn{1}{l}{$\alpha_l$} &
\multicolumn{1}{l}{$(1-\alpha_l)\lambda r_l \mathbb E[C]$} &
\multicolumn{1}{l}{$d_{l,0}$} &
\multicolumn{1}{l}{$d_{l,1}$} &
\multicolumn{1}{l}{$d_{l,2}$} &
\multicolumn{1}{l}{$d_{l,3}$} \\
\hline
3 &0.0508 &2.1844 &0.4369 &0.26 &0.3233 &0 &0 &0 &3.8423\\
2 &0.0591 &1.8899 &0.3780 &0.25 &0.2835 &0 &0 &0 &3.6945\\
1 &0.0716 &1.6543 &0.3309 &0.24 &0.2514 &0 &0 &0 &3.5467\\
0 &0.8185 &0.8050 &0.1610 &0 &0.1610 &0 &0 &0 &0\\
\hline
\end{tabular*}\vspace*{-2.5pt}
\end{table}

\emph{Tables~\ref{table:5} and~\ref{table:6} give examples of
bonus--malus system in the case where deductibles are applied only
to claims of type 3. In Table~\ref{table:6}, we take larger values of
$\alpha_l$, so we get higher values of the deductibles.}\vadjust{\goodbreak}

\emph{If we also apply deductibles to claims of type 2 for the same
values of $\alpha_l$, the values of deductibles $d_{l,3}$
are not so high (see Table~\ref{table:7}). Moreover, in this case, we
can take larger values of $\alpha_l$ such that
Assumptions~\ref{assum:1} and~\ref{assum:2} hold (see Table~\ref{table:8}).}

\emph{Table~\ref{table:9} presents an example of bonus--malus system in
the case where deductibles are applied to claims
of types 1, 2 and 3. On the one hand, policyholders who are in the
malus zone pay not such high premiums. On the other hand,
the corresponding values of deductibles are moderate. This bonus--malus
system seems to be more attractive.}

\begin{table}[t!]
\caption{Bonus--malus system with varying deductibles for Example~\ref
{exam:3} where deductibles are applied to claims
of types 2 and 3}\label{table:7}
\begin{tabular*}{\textwidth}{@{\extracolsep{\fill
}}c
lllllllll@{}}
\hline
$l$ & \multicolumn{1}{l}{$\pi_l$} & \multicolumn{1}{l}{$r_l$} &
\multicolumn{1}{l}{$\lambda r_l \mathbb E[C]$} &
\multicolumn{1}{l}{$\alpha_l$} &
\multicolumn{1}{l}{$(1-\alpha_l)\lambda r_l \mathbb E[C]$} &
\multicolumn{1}{l}{$d_{l,0}$} &
\multicolumn{1}{l}{$d_{l,1}$} &
\multicolumn{1}{l}{$d_{l,2}$} &
\multicolumn{1}{l}{$d_{l,3}$} \\
\hline
3 &0.0508 &2.1844 &0.4369 &0.26 &0.3233 &0 &0 &1.1 &1.9522\\
2 &0.0591 &1.8899 &0.3780 &0.25 &0.2835 &0 &0 &1.1 &1.8044\\
1 &0.0716 &1.6543 &0.3309 &0.24 &0.2514 &0 &0 &1.1 &1.6566\\
0 &0.8185 &0.8050 &0.1610 &0 &0.1610 &0 &0 &0 &0\\
\hline
\end{tabular*}  \vspace*{-3pt}
\end{table}

\begin{table}[t!]
\caption{Bonus--malus system with varying deductibles for Example~\ref
{exam:3} where deductibles are applied to claims
of types 2 and 3. Larger values of $\alpha_l$}\label{table:8}
\begin{tabular*}{\textwidth}{@{\extracolsep{\fill
}}c
lllllllll@{}}
\hline
$l$ & \multicolumn{1}{l}{$\pi_l$} & \multicolumn{1}{l}{$r_l$} &
\multicolumn{1}{l}{$\lambda r_l \mathbb E[C]$} &
\multicolumn{1}{l}{$\alpha_l$} &
\multicolumn{1}{l}{$(1-\alpha_l)\lambda r_l \mathbb E[C]$} &
\multicolumn{1}{l}{$d_{l,0}$} &
\multicolumn{1}{l}{$d_{l,1}$} &
\multicolumn{1}{l}{$d_{l,2}$} &
\multicolumn{1}{l}{$d_{l,3}$} \\
\hline
3 &0.0508 &2.1844 &0.4369 &0.45 &0.2403 &0 &0 &1.7 &3.7291\\
2 &0.0591 &1.8899 &0.3780 &0.40 &0.2268 &0 &0 &1.6 &3.1620\\
1 &0.0716 &1.6543 &0.3309 &0.35 &0.2151 &0 &0 &1.5 &2.5949\\
0 &0.8185 &0.8050 &0.1610 &0 &0.1610 &0 &0 &0 &0\\
\hline
\end{tabular*}    \vspace*{-3pt}
\end{table}

\begin{table}[t!]
\caption{Bonus--malus system with varying deductibles for Example~\ref
{exam:3} where deductibles are applied to claims
of types 1, 2 and 3}\label{table:9}
\begin{tabular*}{\textwidth}{@{\extracolsep{\fill
}}c
lllllllll@{}}
\hline
$l$ & \multicolumn{1}{l}{$\pi_l$} & \multicolumn{1}{l}{$r_l$} &
\multicolumn{1}{l}{$\lambda r_l \mathbb E[C]$} &
\multicolumn{1}{l}{$\alpha_l$} &
\multicolumn{1}{l}{$(1-\alpha_l)\lambda r_l \mathbb E[C]$} &
\multicolumn{1}{l}{$d_{l,0}$} &
\multicolumn{1}{l}{$d_{l,1}$} &
\multicolumn{1}{l}{$d_{l,2}$} &
\multicolumn{1}{l}{$d_{l,3}$} \\
\hline
3 &0.0508 &2.1844 &0.4369 &0.45 &0.2403 &0 &0.7 &1.5 &2.8383\\
2 &0.0591 &1.8899 &0.3780 &0.40 &0.2268 &0 &0.5 &1.4 &2.6239\\
1 &0.0716 &1.6543 &0.3309 &0.35 &0.2151 &0 &0.3 &1.3 &2.4096\\
0 &0.8185 &0.8050 &0.1610 &0 &0.1610 &0 &0 &0 &0\\
\hline
\end{tabular*}  \vspace*{-3pt}
\end{table}
\end{exam}

\begin{exam}
\label{exam:4}
\emph{Let $\lambda=0.1$, $\mu=2$, $c_1^*=0.3$, $c_2^*=1.2$, $c_3^*=2.8$.
Then we have $q_0 \approx0.1393$, $q_1 \approx0.3119$, $q_2 \approx
0.3022$, $q_3 \approx0.2466$.
The corresponding values of $\pi_l$ and $r_l$ as well as examples of
bonus--malus systems with varying deductibles
are given in Tables~\ref{table:10}--\ref{table:12}.}

\begin{table}[t!]
\caption{Bonus--malus system with varying deductibles for Example~\ref
{exam:4} where deductibles are applied to claims
of types 2 and 3}\label{table:10}
\begin{tabular*}{\textwidth}{@{\extracolsep{\fill
}}c
lllllllll@{}}
\hline
$l$ & \multicolumn{1}{l}{$\pi_l$} & \multicolumn{1}{l}{$r_l$} &
\multicolumn{1}{l}{$\lambda r_l \mathbb E[C]$} &
\multicolumn{1}{l}{$\alpha_l$} &
\multicolumn{1}{l}{$(1-\alpha_l)\lambda r_l \mathbb E[C]$} &
\multicolumn{1}{l}{$d_{l,0}$} &
\multicolumn{1}{l}{$d_{l,1}$} &
\multicolumn{1}{l}{$d_{l,2}$} &
\multicolumn{1}{l}{$d_{l,3}$} \\
\hline
3 &0.0653 &2.0731 &0.4146 &0.20 &0.3317 &0 &0 &0.30 &1.2544\\
2 &0.0717 &1.7925 &0.3585 &0.15 &0.3047 &0 &0 &0.25 &0.9102\\
1 &0.0679 &1.6263 &0.3253 &0.10 &0.2927 &0 &0 &0.20 &0.5659\\
0 &0.7951 &0.7869 &0.1574 &0 &0.1574 &0 &0 &0 &0\\
\hline
\end{tabular*}  \vspace*{-3pt}
\end{table}

\begin{table}[t!]
\caption{Bonus--malus system with varying deductibles for Example~\ref
{exam:4} where deductibles are applied to claims
of types 1, 2 and 3}\label{table:11}
\begin{tabular*}{\textwidth}{@{\extracolsep{\fill
}}c
lllllllll@{}}
\hline
$l$ & \multicolumn{1}{l}{$\pi_l$} & \multicolumn{1}{l}{$r_l$} &
\multicolumn{1}{l}{$\lambda r_l \mathbb E[C]$} &
\multicolumn{1}{l}{$\alpha_l$} &
\multicolumn{1}{l}{$(1-\alpha_l)\lambda r_l \mathbb E[C]$} &
\multicolumn{1}{l}{$d_{l,0}$} &
\multicolumn{1}{l}{$d_{l,1}$} &
\multicolumn{1}{l}{$d_{l,2}$} &
\multicolumn{1}{l}{$d_{l,3}$} \\
\hline
3 &0.0653 &2.0731 &0.4146 &0.24 &0.3151 &0 &0.10 &0.60 &1.0847\\
2 &0.0717 &1.7925 &0.3585 &0.22 &0.2796 &0 &0.10 &0.55 &0.9838\\
1 &0.0679 &1.6263 &0.3253 &0.20 &0.2602 &0 &0.05 &0.50 &0.9461\\
0 &0.7951 &0.7869 &0.1574 &0 &0.1574 &0 &0 &0 &0\\
\hline
\end{tabular*}
\end{table}

\begin{table}[t!]
\caption{Bonus--malus system with varying deductibles for Example~\ref
{exam:4} where deductibles are applied to claims
of types 1, 2 and 3. Larger values of $\alpha_l$}\label{table:12}
\begin{tabular*}{\textwidth}{@{\extracolsep{\fill
}}c
lllllllll@{}}
\hline
$l$ & \multicolumn{1}{l}{$\pi_l$} & \multicolumn{1}{l}{$r_l$} &
\multicolumn{1}{l}{$\lambda r_l \mathbb E[C]$} &
\multicolumn{1}{l}{$\alpha_l$} &
\multicolumn{1}{l}{$(1-\alpha_l)\lambda r_l \mathbb E[C]$} &
\multicolumn{1}{l}{$d_{l,0}$} &
\multicolumn{1}{l}{$d_{l,1}$} &
\multicolumn{1}{l}{$d_{l,2}$} &
\multicolumn{1}{l}{$d_{l,3}$} \\
\hline
3 &0.0653 &2.0731 &0.4146 &0.45 &0.2280 &0 &0.20 &0.9 &2.2937\\
2 &0.0717 &1.7925 &0.3585 &0.40 &0.2151 &0 &0.15 &0.8 &2.0740\\
1 &0.0679 &1.6263 &0.3253 &0.35 &0.2114 &0 &0.10 &0.7 &1.8543\\
0 &0.7951 &0.7869 &0.1574 &0 &0.1574 &0 &0 &0 &0\\
\hline
\end{tabular*}
\end{table}
\end{exam}

\section{Conclusion}
\label{sec:6}

The bonus--malus systems with different claim types and varying
deductibles eliminate both drawbacks of the traditional
bonus--malus systems mentioned in Section~\ref{sec:1} and present a
number of advantages, namely:
\begin{itemize}
\item
Policyholders reporting small and large claims are not penalized in the
same way. This helps to avoid or at least decrease bonus hunger.

\item
Policyholders will do all their best to prevent or at least decrease
the losses.

\item
Even if a policyholder leaves the company after a claim, he has to pay
for the deductible.

\item
Relative premiums and amounts of deductibles may be tuned in an optimal
way in order to attract policyholders.
\end{itemize}

Section~\ref{sec:5} gives a few examples of such bonus--malus systems.
The numerical illustrations show that use of both penalty types
(premium relativities and varying deductibles) in this way seems indeed
attractive and fair for policyholders.
On the one hand, policyholders who are in the malus zone pay not such
high premiums.
On the other hand, the corresponding values of deductibles are moderate.
In addition, it is fair that only policyholders reporting (large)
claims are subject to further penalties, i.e. deductibles.

\section*{Acknowledgments}
The author is deeply grateful to the anonymous referees for careful
reading and valuable comments and suggestions,
which helped to improve the earlier version of the paper.

\end{document}